\newtheorem{lem}{Lemma}
\newtheorem{thm}{Theorem}
\newtheorem{rem}{Remark}
\newtheorem{Def}{Definition}
\newtheorem{Assumption}{Assumption}
\begin{document}
%
\title{Reaching a Consensus in Networks of High-Order Integral Agents under Switching Directed Topology
\thanks{The authors are with the State Key Laboratory of Intelligent Control and Management of Complex Systems, Institute of Automation, Chinese
Academy of Sciences, Beijing 100190, China.}\thanks{Please address all correspondences to Dr. Long Cheng: chenglong@compsys.ia.ac.cn, Phone: +86-10-62568112, Fax:
+86-10-82629972.}}

\author{Long~Cheng,~Zeng-Guang Hou,~and~Min Tan}

\maketitle

\vspace{0.5cm}

\begin{abstract}
Consensus problem of high-order integral multi-agent systems under switching directed topology is considered in this study. Depending on whether the agent's full state is available or not, two distributed protocols are proposed to ensure that states of all agents can be convergent to a same stationary value. In the proposed protocols, the gain vector associated with the agent's (estimated) state and the gain vector associated with the relative (estimated) states between agents are designed in a sophisticated way. By this particular design, the high-order integral multi-agent system can be transformed into a first-order integral multi-agent system. And the convergence of the transformed first-order integral agent's state indicates the convergence of the original high-order integral agent's state if and only if all roots of the polynomial, whose coefficients are the entries of the gain vector associated with the relative (estimated) states between agents, are in the open left-half complex plane. Therefore, many analysis techniques in the first-order integral multi-agent system can be directly borrowed to solve the problems in the high-order integral multi-agent system. Due to this property, it is proved that to reach a consensus, the switching directed topology of multi-agent system is only required to be ``uniformly jointly quasi-strongly connected'', which seems the mildest connectivity condition in the literature. In addition, the consensus problem of discrete-time high-order integral multi-agent systems is studied. The corresponding consensus protocol and performance analysis are presented. Finally, three simulation examples are provided to show the effectiveness of the proposed approach.
\end{abstract}

\begin{IEEEkeywords}
Multi-agent systems, high-order integral agent, linear dynamics, switching directed topology, consensus.
\end{IEEEkeywords}

\IEEEpeerreviewmaketitle


\section{Introduction}
Recent years have witnessed a rapid development on the consensus or state agreement of multi-agent systems due to its central
role in distributed coordination tasks. Roughly speaking, the consensus problem means that the states of all agents are convergent to a same value in a distributed manner. In control community, considerable research efforts have been made to address this problem from different aspects such as the dynamics of agent \cite{Zhao13TSMCB, Cheng09TSMCB}, connectivity of communication topology \cite{Chen13TSMCB}, and communication constraints \cite{Zhong12TSMCB, Meng11TSMCB}. It is noted that for the convenience of studying the last two aspects, most results assume that the agent is modeled by the first-order/second-order integral dynamics. However, due to the diversity of real-world agents, these simple dynamics are insufficient to fully describe the agent's dynamical behavior. To fill this gap, an intuitive way is to investigate the agent described by the high-order integral/general linear time-invariant dynamics first.

In the literature, consensus problems of high-order integral/linear multi-agent systems have been addressed in \cite{Xiao07IETCTA,Ma10TAC,Cheng11Automatica,Kim11TAC,Zhang12Automatica,Zhang11TAC,Li10TCASI,Yu11TCASI,Cheng12TAC,Tian12Automatica,You11TAC,
Wang08AJC,Jiang10IJC,Ni10SCL,Xu13IET,Su12TAC,Su12Automatica,Su12TSMCB}, to name a few. Some early results are presented in \cite{Xiao07IETCTA} where a state-feedback based consensus protocol is proposed for linear time-invariant multi-agent systems under fixed communication topology. Since the full state of linear agents may be unavailable in some scenarios, several output-feedback based consensus protocols have been proposed as well, for example, the static output-feedback based protocol \cite{Ma10TAC} and the dynamic output-feedback based protocols \cite{Cheng11Automatica}. And the consensusability of linear multi-agent systems under these output-feedback protocols is analyzed. In \cite{Kim11TAC,Zhang12Automatica}, uncertainties in the dynamics of high-order integral/linear agents are considered, and the consensus can still be achieved by employing the internal model approach and the neural-network-based adaptive approach, respectively. An interesting leader-following consensus protocol is proposed for linear multi-agent systems in \cite{Zhang11TAC}. Furthermore, in \cite{Li10TCASI, Yu11TCASI}, the relationship between the consensus of high-order integral/linear multi-agent systems and the synchronization in complex networks is discussed, and the concept of consensus region is introduced. In addition, there are also some papers considering the consensus of high-order integral/linear multi-agent systems with communication constraints such as communication noises \cite{Cheng12TAC}, communication delays \cite{Tian12Automatica}, and quantization effect \cite{You11TAC}. It is noted that all above results assume that the multi-agent system works under fixed communication topology. However, in some applications of mobile agents, the creation and loss of communication links frequently occur due to the sensor's limited working region. Therefore, the study on the consensus of high-order integral/linear multi-agent systems under switching topology is of more practical use.

Some attempts in this direction have been made in \cite{Wang08AJC,Jiang10IJC,Ni10SCL,Xu13IET,Su12TAC,Su12Automatica,Su12TSMCB}. One common assumption of these papers is that the topology needs to be undirected,
which requires the bidirectional information exchange between agents.
Furthermore, in \cite{Wang08AJC}, the communication topology is required to be ``frequently connected'' rather than ``jointly connected''.
In \cite{Jiang10IJC,Ni10SCL,Xu13IET}, to design the control gain, the eigenvalues of graph Laplacian matrix (a certain global information) should be known.
In \cite{Su12TAC,Su12Automatica}, the result is highly dependent on the doubly stochastic property of Laplacian matrix of undirected graph. In \cite{Su12TSMCB}, the consensus tracking problem of linear multi-agent systems under switching topologies is investigated, however, all agents need to know the system matric in the so-called exosystem which generates the  tracking reference signal. 
It is also noted that consensus of first-order integral multi-agent systems with switching directed topology has been extensively studied.
The mildest connectivity condition on the switching directed topology seems ``uniformly jointly quasi-strongly connected'' \cite{Ren05TAC,Shi12arXiv}.
Then a question arises consequently: is it possible to obtain the counterpart result in the high-order integral/linear multi-agent systems?

This paper tries to give a positive answer to the above question. Two novel protocols, the state-feedback based protocol and the output-feedback based protocol, are proposed to solve the consensus problem of continuous-time high-order integral multi-agent systems. By the proposed protocol, it is interesting to find that the original high-order integral multi-agent system can be transformed into a first-order integral multi-agent system. And the convergence of the transformed first-order integral agent's state implies the convergence of the high-order integral agent's state if and only if all roots of the polynomial, whose coefficients are the entries of control gain vector of the proposed protocol, are in the open left-half complex plane. Therefore, most results in the first-order integral multi-agent system can be generalized to the high-order integral multi-agent system directly. Then it is proved that the connectivity condition on the consensus of high-order integral multi-agents under switching directed topology is that the switching topology is uniformly jointly quasi-strongly connected. To the best of the authors' knowledge, it seems the mildest connectivity condition in the literature of high-order integral multi-agent systems. In addition, results obtained in the continuous-time domain can be extended to the discrete-time domain as well. Finally, the theoretical analysis is validated by three illustrative examples.

The reminder of this paper is organized as follows. Section {\ref{Problem}} introduces the problem formulation and some
preliminary results. Section {\ref{continuous}} presents the consensus protocol for high-order multi-agent systems in the continuous-time domain and gives the corresponding performance analysis. Section {\ref{discrete}} extends the results in the continuous-time domain to the discrete-time domain. Illustrative examples  are provided in Section {\ref{Simulation}}. Section {\ref{Conclusion}} concludes this paper
with final remarks.

The following notations are used throughout this paper: $\mathbb{R}$ denotes the set of real numbers; $\mathbb{N}$ denotes the set of natural number; $1_n = (1,\cdots,1)^T \in
\mathbb{R}^n$; $0_n = (0,\cdots,0)^T \in \mathbb{R}^n$; $\mathbb{C}^k_n = \frac{n!}{k!(n-k)!}$; $I_n$ denotes the $n\times n$ dimensional
identity matrix; $\otimes$ denotes the Kronecker operator;
For a given matrix $X$, $X^T$ denotes its transpose; $\|X\|_2$ denotes its Euclidean norm; $\|X\|_F$ denotes its Frobenius norm.


\section{Problem Formation and Preliminaries}\label{Problem}
In the literature, the graph theory is commonly employed to describe the communication among agents.
Let $\mathcal{G} = \{\mathcal{V}_\mathcal{G}, \mathcal{E}_\mathcal{G}\}$ be a digraph where $\mathcal{V}_\mathcal{G} = \{v_1,\cdots,v_N\}$ is the node
set and $\mathcal{E}_\mathcal{G} \subseteq \mathcal{V}_\mathcal{G} \times \mathcal{V}_\mathcal{G}$ is the edge set. The $i$th agent in the network is
represented by the node $v_i$. The edge $e_{ij}$ starting from $v_j$ to $v_i$ belongs to $\mathcal{E}_\mathcal{G}$ if and only if the $i$th agent can
receive information from the $j$th agent. In this paper, it is assumed that there is no self-edge in the graph,
\emph{i.e.}, $e_{ii} \notin \mathcal{E}_\mathcal{G}, i=1,\cdots,N$. The neighbor set of the $i$th agent is defined
as $\mathcal{N}_i = \{v_j \in \mathcal{V}_\mathcal{G} | e_{ij} \in \mathcal{E}_\mathcal{G}\}$. A directed path in $\mathcal{G}$ is a sequence of
distinct nodes $v_{i_1},v_{i_2},\cdots,v_{i_m}$ such that $e_{i_{s+1}i_s} \in \mathcal{E}_\mathcal{G}, s=1,\cdots,m-1$. The directed graph $\mathcal{G}$ is
called strongly connected if for any two distinct nodes $v_i$ and $v_j$ there is a directed path starting from $v_j$ to $v_i$. A node is called a center/root
if there are directed paths starting from this node to any other nodes in $\mathcal{V}_\mathcal{G}$. The directed graph $\mathcal{G}$ is called
quasi-strongly connected if $\mathcal{G}$ has at least one center node.  


Due to the link failure and packet loss, the communication channel of multi-agent systems is usually time-variant. Therefore, it is assumed in this paper that the communication topology is modeled by a set of graphs $\mathcal{G}_{\sigma(t)} = \{\mathcal{V}_\mathcal{G}, \mathcal{E}_{\mathcal{G}_{\sigma(t)}}\}$. $\sigma(\cdot): [0,\infty) \rightarrow \mathcal{S}$ is a piecewise constant function whose value at time $t$ is the index of the graph representing the agent's communication topology at time $t$. $\mathcal{S}$ denotes the index set of all possible graphs and $\mathcal{S}$ has finite elements because the node set $\mathcal{V}_\mathcal{G}$ is a finite set. The following assumption is applied to the switching signal $\sigma(\cdot)$, which means that the communication topology does not change too fast.

\begin{Assumption}\label{assumption1}
Let $(t_0,t_1,\cdots)$ be the sequence of time points at which the piecewise constant function $\sigma(t)$ switches. The dwell times $\tau_i = t_{i+1} - t_i$ $(i=0,1,\cdots)$ have a uniform lower bound $\tau_D > 0$.
\end{Assumption}

The union graph of $\mathcal{G}_{\sigma(t)}$ over the time interval $[t_1,t_2)$ is defined as $\mathcal{G}_{[t_1,t_2)} = (\mathcal{V}_\mathcal{G}, \bigcup_{t\in[t_1,t_2)}\mathcal{E}_{\mathcal{G}_\sigma(t)})$. One concept on the connectivity of the union graph is given in the following.
\begin{Def}
$\mathcal{G}_{\sigma(t)}$ is said to be uniformly jointly quasi-strongly connected if there exists a constant $T>0$ such that $\mathcal{G}_{[t,t+T)}$ is quasi-strongly connected for any $t>0$.
\end{Def}

The above section introduces the communication topology of the network of $N$ agents. The other essential part of modeling the multi-agent system is the agent's dynamics. In this paper, the dynamical behavior of each agent is described by the following $m$-dimensional high-order integral dynamics
\begin{IEEEeqnarray}{ccc}\label{equ:agentdynamics}
\begin{cases}
&\dot{x}_{i,1}(t)=x_{i,2}(t),\\
&\quad\quad\vdots\\
& \dot{x}_{i,m-1}(t)=x_{i,m}(t), \\
&\dot{x}_{i,m}(t)=u_{i}(t),
\end{cases}
\end{IEEEeqnarray}
which can be written in the compact form
\begin{IEEEeqnarray}{lll}\label{equ:agentdynamicscompact}
&\dot{x}_i(t) = Ax_i(t) + Bu_i(t),\IEEEnonumber\\
& A = \begin{pmatrix} 0_{m-1} & I_{m-1}\\ 0 & 0^T_{m-1}\end{pmatrix}, \IEEEnonumber\\
& B= (0,\cdots,0,1)^T, x_i(t) = (x_{i,1}(t),\cdots,x_{i,m}(t))^T.
\end{IEEEeqnarray}

The control objective is to solve the distributed consensus problem of this high-order integral multi-agent system which is defined as follows.
\begin{Def}\label{def1}
The multi-agent system is said to reach a consensus if under certain protocol $u_i(t)$, there exists a vector $x^*\in \mathbb{R}^m$ such that
\begin{displaymath}
\lim_{t\to\infty}x_i(t) = x^*,\quad i=1,\cdots,N.
\end{displaymath}
If the consensus protocol $u_i(t)$ only employs the information from the neighbor agents $j \in \mathcal{N}_i$, then this protocol is called a distributed consensus protocol.
\end{Def}

Before closing this section, the following result on the robust consensus of first-order integral multi-agents is provided \cite{Shi12arXiv}.
\begin{lem}\label{keylem}
For the first-order integral multi-agent system with external disturbances
\begin{equation}\label{equ:firstorder}
\dot{r}_i(t) = \sum_{j\in \mathcal{N}_i(\sigma(t))}\alpha^{ij}_{\sigma(t)}(r_j(t) - r_i(t)) + \omega_i(t), \quad i=1,\cdots,N,
\end{equation}
where $r_i(t) \in \mathbb{R}$ denotes the state of the $i$th first-order integral agent, $\mathcal{N}_i(\sigma(t))$ denotes the neighbor set of agent $i$ at time $t$, $\alpha^{ij}_{\sigma(t)} > 0$ is the weight constant associated with the edge $e_{ij}$ in graph $\mathcal{G}_{\sigma(t)}$, and $\omega_i(t) \in \mathbb{R}$ is the continuous disturbance signal. For $\forall r_i(0)$ and $\forall \omega_i(t)$ satisfying $\sup_{t\in[0,\infty)}|\omega_i(t)| < \infty$ and $\lim_{t\to\infty}\omega_i(t) = 0$, then $\lim_{t\to\infty} (r_i(t)-r_j(t)) = 0$ if $\mathcal{G}_{\sigma(t)}$ is uniformly jointly quasi-strongly connected;
\end{lem}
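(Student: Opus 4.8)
The plan is to reduce the lemma to showing that the disagreement spread $\Psi(t):=\max_{1\le i\le N}r_i(t)-\min_{1\le i\le N}r_i(t)$ tends to zero, which is equivalent to $\lim_{t\to\infty}(r_i(t)-r_j(t))=0$ for all $i,j$. Write $M(t)=\max_i r_i(t)$, $m(t)=\min_i r_i(t)$ and $\bar\omega(t):=\max_i|\omega_i(t)|$. Since $\sigma(\cdot)$ is piecewise constant and each $\omega_i$ is continuous, every $r_i$ is piecewise $C^1$, so the upper Dini derivatives of $M$ and $m$ are attained along the currently extremal agents; using $r_j(t)\le r_i(t)$ whenever $r_i(t)=M(t)$ gives $D^+M(t)\le\bar\omega(t)$ and, symmetrically, $D^+m(t)\ge-\bar\omega(t)$, hence $D^+\Psi(t)\le2\bar\omega(t)$. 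Two uniformities will be used throughout: since $\mathcal S$ is finite the weights obey $0<\alpha_-\le\alpha^{ij}_{\sigma(t)}\le\alpha_+$ for all $t$, and by Assumption~\ref{assumption1} each interval of length $T$ contains at most $T/\tau_D$ switching instants. These are what keep the constants below independent of the base time.

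The heart of the proof is a disturbance-free contraction estimate: there exist an integer $\kappa$ of order $N$ and a constant $c\in(0,1)$, both depending only on $N,T,\tau_D,\alpha_-,\alpha_+$, such that for the system with $\omega_i\equiv0$ one has $\Psi(t_0+\kappa T)\le c\,\Psi(t_0)$ for every $t_0\ge0$. Without disturbance $M$ is nonincreasing and $m$ is nondecreasing, so $\Psi$ is nonincreasing, and it suffices to make it shrink by a fixed fraction over $\kappa$ consecutive windows. Split each window at its switching instants into subintervals on which $\mathcal G_{\sigma(t)}$ is a fixed digraph with (weighted) Laplacian $L$; the disturbance-free flow over such a subinterval of length $\tau$ is $r\mapsto e^{-L\tau}r$, and $e^{-L\tau}$ is row-stochastic with diagonal entries bounded below, and with entry $(i,j)$ bounded below by a positive constant whenever $e_{ij}$ is an edge of that digraph (uniformly for $\tau\in[\tau_D,T]$, by compactness and finiteness of $\mathcal S$). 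Because each of the $\kappa$ windows is quasi-strongly connected, one can propagate influence from a center: a suitable agent $c$ (a center of the first window) has a temporally ordered chain of such subintervals reaching every agent, each hop contributing a fixed multiplicative factor, and fewer than $N$ hops ever suffice; consequently the transition matrix $\Phi(t_0+\kappa T,t_0)$ of $\dot r=-L_{\sigma(t)}r$ has a column (column $c$) bounded below entrywise by a fixed $\gamma>0$. Hence every $r_i(t_0+\kappa T)$ is a convex combination of the $r_l(t_0)$ with weight at least $\gamma$ on $r_c(t_0)$; if $r_c(t_0)\le\tfrac12(M(t_0)+m(t_0))$ this forces $M(t_0+\kappa T)\le M(t_0)-\tfrac{\gamma}{2}\Psi(t_0)$, in the opposite case $m(t_0+\kappa T)\ge m(t_0)+\tfrac{\gamma}{2}\Psi(t_0)$, and either way, with the monotonicity of $M$ and $m$, $\Psi(t_0+\kappa T)\le(1-\tfrac{\gamma}{2})\Psi(t_0)$. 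Establishing this column bound---in particular the fact that a single window does not suffice, because within one window the edges of a path from a center need not appear in the correct temporal order, so several windows are needed for the influence to traverse the network---is the combinatorial core of the argument and is the fact taken from \cite{Shi12arXiv}; I regard it as the main obstacle.

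It remains to feed the vanishing disturbance into this estimate. Fix $\epsilon>0$ and pick $t_\epsilon$ with $\bar\omega(t)<\epsilon$ for all $t\ge t_\epsilon$; since $\sup_t\bar\omega(t)=:W<\infty$, integrating $D^+\Psi\le2\bar\omega$ on $[0,t_\epsilon]$ gives $\Psi(t_\epsilon)\le\Psi(0)+2Wt_\epsilon<\infty$. For any $t_0\ge t_\epsilon$, compare the disturbed trajectory on $[t_0,t_0+\kappa T]$ with the disturbance-free trajectory started from the same state at $t_0$: the componentwise error $e_i$ solves the same coupled dynamics forced by $\omega_i$ with $e_i(t_0)=0$, so $\max_i|e_i(t)|$ has upper Dini derivative $\le\bar\omega(t)<\epsilon$ and stays $\le\epsilon\kappa T$ on the window; hence $M$ and $m$ of the disturbed trajectory differ from their disturbance-free counterparts by at most $\epsilon\kappa T$, and the contraction estimate upgrades to $\Psi(t_0+\kappa T)\le c\,\Psi(t_0)+2\epsilon\kappa T$. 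Iterating over the windows $t_\epsilon+j\kappa T$, $j=0,1,2,\dots$, yields $\limsup_{j\to\infty}\Psi(t_\epsilon+j\kappa T)\le 2\epsilon\kappa T/(1-c)$; and since $D^+\Psi\le2\epsilon$ on $[t_\epsilon,\infty)$, $\Psi$ grows by at most $2\epsilon\kappa T$ between consecutive window endpoints, so $\limsup_{t\to\infty}\Psi(t)\le 2\epsilon\kappa T\bigl(1/(1-c)+1\bigr)$. As $c,\kappa,T$ do not depend on $\epsilon$ and $\epsilon>0$ is arbitrary, $\Psi(t)\to0$, i.e.\ $r_i(t)-r_j(t)\to0$ for all $i,j$, which is the claim.
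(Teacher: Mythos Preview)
The paper does not give its own proof of this lemma; it simply refers to Proposition~1 of \cite{Shi12arXiv}. Your sketch is a faithful reconstruction of the standard argument behind that result: establish a uniform contraction of the spread $\Psi(t)=\max_i r_i(t)-\min_i r_i(t)$ for the disturbance-free dynamics over blocks of length $\kappa T$, then absorb the vanishing disturbance by a comparison argument and iterate. You also explicitly defer the combinatorial core (the uniform positive lower bound on a column of the transition matrix $\Phi(t_0+\kappa T,t_0)$) to \cite{Shi12arXiv}, so in substance your proof and the paper's coincide; you have simply unpacked what the citation contains.

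One minor imprecision in your sketch: the column of $\Phi(t_0+\kappa T,t_0)$ that is uniformly positive need \emph{not} be the one indexed by a center of the first window. Because centers may change from window to window, influence from the first-window center can stall. For instance, with $N=3$, let the first window realise the path $1\to2\to3$ with the edge $2\to3$ active before $1\to2$, and let every subsequent window be the star $3\to1$, $3\to2$; the system is uniformly jointly quasi-strongly connected, yet the $(3,1)$ entry of the accumulated product remains zero forever, while column~$3$ becomes positive after two windows. What the contraction actually needs---and what is proved in \cite{Shi12arXiv} and in the Ren--Beard/Moreau line of work---is only that over sufficiently many windows \emph{some} column is bounded below (equivalently, the product is scrambling with Dobrushin coefficient uniformly below~$1$). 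With that in hand, your Hajnal-type estimate and the perturbation step go through unchanged, so this is not a gap given your citation; just drop the parenthetical identification of~$c$.
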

\begin{proof}
 See the proof of Proposition 1 in \cite{Shi12arXiv}.
 \end{proof}

\section{Consensus Protocol and Related Analysis}\label{continuous}
\subsection{State-Feedback Based Consensus Protocol}
In this subsection, it is assumed that the full state information of each agent is available for designing the consensus protocol. Inspired by the results presented in \cite{Cheng12TAC}, the following consensus protocol is proposed
\begin{equation}\label{equ:protocol}
u_i(t) = K_1x_i(t) - \sum_{j\in \mathcal{N}_i(\sigma(t))}\alpha^{ij}_{\sigma(t)}K_2(x_i(t)-x_j(t)),
\end{equation}
where $K_1 = (0,-a_1,-a_2,\cdots,-a_{m-1}) \in \mathbb{R}^{1\times m}$ and $K_2 = (a_1,a_2,\cdots,a_{m-1},1) \in \mathbb{R}^{1\times m}$. The parameters $(a_1,\cdots,a_{m-1})$ are constant control gains to be designed later. $\alpha^{ij}_{\sigma(t)}$ is defined in (\ref{equ:firstorder}).

Substituting (\ref{equ:protocol}) into (\ref{equ:agentdynamics}) obtains that
\begin{equation}\label{equ:tmp4}
\dot{X}(t) = (I_N \otimes (A+BK_1) - L_{\mathcal{G}_{\sigma(t)}}\otimes (BK_2))X(t),
\end{equation}
where $X(t) = (x^T_1(t),\cdots,x^T_N(t))^T$, the off-diagonal entry $(i,j)$ of $L_{\mathcal{G}_{\sigma(t)}}$ is $-\alpha^{ij}_{\sigma(t)}$, and the diagonal entry $(i,i)$ of $L_{\mathcal{G}_{\sigma(t)}}$ is $\sum_{j\in \mathcal{N}_i(\sigma(t))}\alpha^{ij}_{\sigma(t)}$. Here $L_{\mathcal{G}_{\sigma(t)}}$ is called the Laplacian matrix of the graph $\mathcal{G}_{\sigma(t)}$.


Let $\bar{X}(t)  = (I_N\otimes K_2)X(t) \equiv (\bar{x}_1(t),\cdots,\bar{x}_N(t))^T \in \mathbb{R}^N$, then multiplying $(I_N\otimes K_2)$ at the both sides of (\ref{equ:tmp4}) obtains that
\begin{IEEEeqnarray}{lll}\label{equ:tmp5}
\dot{\bar{X}}(t) &= (I_N \otimes K_2(A+BK_1) - L_{\mathcal{G}_{\sigma(t)}}\otimes (K_2BK_2))X(t) \IEEEnonumber\\
&=  -(L_{\mathcal{G}_{\sigma(t)}}\otimes (K_2))X(t) \IEEEnonumber\\
&= -(L_{\mathcal{G}_{\sigma(t)}}\otimes 1)(I_N\otimes K_2)X(t) = -L_{\mathcal{G}_{\sigma(t)}}\bar{X}(t).
\end{IEEEeqnarray}

From (\ref{equ:tmp5}), it is interesting to find that under the proposed protocol, the original high-order integral multi-agent system has been transformed into a first-order integral multi-agent system. Before proceeding with proving that $\bar{x}_i(t) = K_2x_i(t)$ $(i=1,\cdots,N)$ can reach a consensus, we should first study whether the convergence of $\bar{x}_i(t)$ implies the convergence of $x_i(t)$, which is answered by the following lemma.

\begin{lem}\label{lem1}
Consider the following non-homogeneous linear differential equation
\begin{equation}\label{equ:lemdifeqn}
r^{(m-1)}(t) + a_{m-1}r^{(m-2)}(t) + \cdots + a_2r^{(1)}(t) + a_1r(t) = f(t),
\end{equation}
where $r(t) \in \mathbb{R}$ and $f(t) \in \mathbb{R}$ is a continuous function satisfying $\lim_{t\to\infty}f(t) = f^*$. Let the characteristic equation associated with (\ref{equ:lemdifeqn}) be
\begin{equation}\label{equ:characteristicequation}
s^{m-1} + a_{m-1}s^{m-2} + \cdots a_2s + a_1 = 0.
\end{equation}
Then,
for any initial state $(r(0),\cdots,r^{(m-2)}(0))$, $(r(t),\cdots,r^{(m-2)}(t))$ are convergent if and only if all roots of (\ref{equ:characteristicequation}) are in the open left-half complex plane. In addition, $\lim\limits_{t\to\infty}(r(t),r^{(1)}(t),\cdots,r^{(m-2)}(t)) = (f^*/a_1,0,\cdots,0)$.
\end{lem}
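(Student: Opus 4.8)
The plan is to recast (\ref{equ:lemdifeqn}) as a first-order linear system in controllable canonical form and then analyze its transition matrix together with its forced response. Put $y(t) = (r(t), r^{(1)}(t), \dots, r^{(m-2)}(t))^T \in \mathbb{R}^{m-1}$; then (\ref{equ:lemdifeqn}) is equivalent to $\dot{y}(t) = Cy(t) + bf(t)$, where $b = (0,\dots,0,1)^T \in \mathbb{R}^{m-1}$ and $C$ is the companion matrix whose characteristic polynomial is exactly the left-hand side of (\ref{equ:characteristicequation}). Consequently the eigenvalues of $C$ are precisely the roots of (\ref{equ:characteristicequation}), and by the variation-of-constants formula $y(t) = e^{Ct}y(0) + \int_0^t e^{C(t-\tau)} b f(\tau)\,d\tau$ for every initial state $y(0)$. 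This reduces the lemma to a statement about $C$ being Hurwitz and about the limiting behaviour of the convolution term.

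For the ``if'' direction, assume all roots of (\ref{equ:characteristicequation}) lie in the open left-half plane, so $C$ is Hurwitz and there exist $M,\lambda>0$ with $\|e^{Ct}\|_2 \le Me^{-\lambda t}$; note also $a_1 \neq 0$, since otherwise $s=0$ would be a root. Substituting the constant function $r \equiv f^*/a_1$ into (\ref{equ:lemdifeqn}) shows that $y^{*} := (f^*/a_1, 0, \dots, 0)^T$ is the (unique, because $C$ is invertible) equilibrium of $\dot{y} = Cy + bf^*$. Writing $f = f^{*} + g$ with $g(t)\to 0$, one has $y(t) - y^{*} = e^{Ct}(y(0)-y^{*}) + \int_0^t e^{C(t-\tau)}b\,g(\tau)\,d\tau$; the first term decays to $0$, and for the second I would split the integral at a time $T_\varepsilon$ past which $|g| < \varepsilon$, bounding $\int_0^{T_\varepsilon}$ by $M\|b\|_2 e^{-\lambda(t-T_\varepsilon)}\int_0^{T_\varepsilon}|g|$, which vanishes as $t\to\infty$, and $\int_{T_\varepsilon}^{t}$ by $M\|b\|_2\varepsilon/\lambda$. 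Letting $\varepsilon\downarrow 0$ gives $y(t)\to y^{*}$, which is exactly the asserted convergence together with the stated limit $(f^*/a_1, 0, \dots, 0)$.

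For the ``only if'' direction I argue by contraposition: if (\ref{equ:characteristicequation}) has a root $s_0$ with $\mathrm{Re}(s_0)\ge 0$, I exhibit an initial state for which $(r(t),\dots,r^{(m-2)}(t))$ does not converge. If $\mathrm{Re}(s_0)>0$, or if $s_0$ is a nonzero purely imaginary root, take $f\equiv 0$ and choose $y(0)$ in the real invariant subspace of $C$ associated with $s_0$: the homogeneous solution is then either unbounded or a non-decaying oscillation. If $s_0 = 0$ is a multiple root, then since a companion matrix has a single Jordan block per eigenvalue, $C$ has a Jordan block of size $\ge 2$ at $0$, so with $f\equiv 0$ there is a solution growing linearly in $t$ (for instance $r(t)=t$, which is admissible precisely because $a_1=a_2=0$ in this case). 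Finally, if $s_0 = 0$ is a simple root, then $a_1 = 0$ but the coefficient next to it is nonzero; taking $f\equiv 1$ (so $f^{*}=1$) and the initial state coming from a degree-one polynomial particular solution yields an unbounded $r(t)$. Thus in every case where the root condition fails the solution is non-convergent, completing the proof.

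The step I expect to be the main obstacle is the ``only if'' direction, specifically the bookkeeping around a root at the origin: under the null input $f\equiv 0$ a \emph{simple} zero eigenvalue produces only constant (hence convergent) solutions, so one must switch to a nonzero constant input there and check that the input genuinely excites the zero mode so that an unbounded particular solution exists --- this is where the controllable canonical structure of the companion realization is used. The convolution estimate in the ``if'' part is routine but is the only truly analytic ingredient and should be written out carefully.
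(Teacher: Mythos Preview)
Your proposal is correct and covers all cases, but the route differs from the paper's in both halves.

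For sufficiency, the paper does not pass to the companion matrix and invoke a global estimate $\|e^{Ct}\|_2\le Me^{-\lambda t}$; instead it writes an explicit iterated particular solution
\[
r^p(t)=e^{\lambda_1 t}\!\int_0^t e^{(\lambda_2-\lambda_1)\tau_1}\cdots\int_0^{\tau_{m-2}} e^{-\lambda_{m-1}\tau_{m-1}}f(\tau_{m-1})\,d\tau_{m-1}\cdots d\tau_1
\]
in terms of the roots $\lambda_1,\dots,\lambda_{m-1}$, proves the scalar limit $\lim_{t\to\infty}e^{\lambda t}\int_0^t e^{-\lambda s}g(s)\,ds=-g^*/\lambda$ for $\mathrm{Re}\,\lambda<0$ by the same split-at-$T_\varepsilon$ estimate you use, and applies it $m-1$ times to obtain $r^p(t)\to f^*/\prod(-\lambda_i)=f^*/a_1$; convergence of the derivatives is then checked by differentiating the iterated integral. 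Your version packages all of this into one matrix exponential bound and one convolution estimate, which is shorter and avoids tracking multiplicities; the paper's version is more explicit about where the constant $1/a_1$ comes from.

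For necessity, the paper also passes to the companion form $\dot r_g=A_r r_g+B_r f$, but instead of your case analysis it argues by linearity: if $A_r$ is not Hurwitz, pick $r_g^1(0)$ with $\Phi_r(t,0)r_g^1(0)$ non-convergent; if nevertheless the solution with some admissible $f_1$ converged, then comparing with the solution for $f_2=2f_1$ from the same initial state forces $\Phi_r(t,0)r_g^1(0)$ to converge, a contradiction. This is slick but tacitly assumes such an $r_g^1(0)$ exists, which fails precisely in the case you single out (a \emph{simple} root at $0$ with all other roots in the open left half-plane: every homogeneous solution then converges). Your handling of that edge case via a constant nonzero input is exactly what is needed to close the gap, so your necessity argument is in fact the more complete of the two.
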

\begin{proof}
See the proof in the Appendix.
\end{proof}

Then the following main result can be obtained by Lemmas \ref{keylem} and \ref{lem1}.
\begin{thm}\label{thm:main}
The proposed protocol defined by (\ref{equ:protocol}) can solve the consensus problem of high-order integral multi-agent systems if all roots of (\ref{equ:characteristicequation}) are in the open left-half complex plane and the communication topology $\mathcal{G}_{\sigma(t)}$ is uniformly jointly quasi-strongly connected.
\end{thm}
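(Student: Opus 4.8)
The plan is to chain together the transformation leading to (\ref{equ:tmp5}) with Lemmas \ref{keylem} and \ref{lem1}: first establish that the auxiliary scalars $\bar{x}_i(t)=K_2x_i(t)$ converge to a common constant, and then transfer this to the original states $x_i(t)$ through the integrator-chain structure of (\ref{equ:agentdynamics}).

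First I would note that (\ref{equ:tmp5}), namely $\dot{\bar{X}}(t)=-L_{\mathcal{G}_{\sigma(t)}}\bar{X}(t)$, is precisely the first-order integral system (\ref{equ:firstorder}) with zero disturbance $\omega_i(t)\equiv 0$, which trivially satisfies the boundedness and decay hypotheses of Lemma \ref{keylem}. Since $\mathcal{G}_{\sigma(t)}$ is uniformly jointly quasi-strongly connected, Lemma \ref{keylem} gives $\lim_{t\to\infty}(\bar{x}_i(t)-\bar{x}_j(t))=0$ for all $i,j$. To upgrade ``relative states vanish'' to ``each $\bar{x}_i$ converges to one common value'', I would invoke the standard extremum monotonicity for consensus dynamics: from $\dot{\bar{x}}_i=\sum_{j}\alpha^{ij}_{\sigma(t)}(\bar{x}_j-\bar{x}_i)$, the upper Dini derivative of $t\mapsto\max_i\bar{x}_i(t)$ is nonpositive and that of $t\mapsto\min_i\bar{x}_i(t)$ is nonnegative, so both are monotone and bounded, hence convergent; as their difference tends to zero they share a common limit $c$, and each $\bar{x}_i(t)$, being squeezed between them, also tends to $c$.

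Next I would use the chain structure $x_{i,k}(t)=x_{i,1}^{(k-1)}(t)$ from (\ref{equ:agentdynamics}) to rewrite $\bar{x}_i(t)=K_2x_i(t)=x_{i,1}^{(m-1)}(t)+a_{m-1}x_{i,1}^{(m-2)}(t)+\cdots+a_1x_{i,1}(t)$, so that $r(t):=x_{i,1}(t)$ solves the scalar ODE (\ref{equ:lemdifeqn}) with continuous forcing $f(t)=\bar{x}_i(t)\to c$. Because all roots of (\ref{equ:characteristicequation}) lie in the open left-half plane, Lemma \ref{lem1} yields $\lim_{t\to\infty}(x_{i,1}(t),\ldots,x_{i,1}^{(m-2)}(t))=(c/a_1,0,\ldots,0)$, i.e.\ $x_{i,1}(t)\to c/a_1$ and $x_{i,k}(t)\to 0$ for $k=2,\ldots,m-1$; finally, reading $x_{i,m}(t)=x_{i,1}^{(m-1)}(t)$ off the ODE gives $x_{i,m}(t)=\bar{x}_i(t)-\sum_{k=1}^{m-1}a_kx_{i,1}^{(k-1)}(t)\to c-a_1(c/a_1)=0$. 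Hence $\lim_{t\to\infty}x_i(t)=(c/a_1,0,\ldots,0)^T$ for every $i$, a common vector, which establishes consensus in the sense of Definition \ref{def1}.

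The algebraic identities and the Dini-derivative monotonicity are routine; the only step that genuinely requires care is the one I flagged above — bridging from the pairwise statement literally delivered by Lemma \ref{keylem} to true convergence of each $\bar{x}_i$ to a constant — since without it Lemma \ref{lem1} could not be applied with a well-defined limiting forcing value $f^*$.
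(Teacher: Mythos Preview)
Your proposal is correct and follows essentially the same route as the paper: apply Lemma~\ref{keylem} to the transformed first-order system (\ref{equ:tmp5}) with zero disturbance, upgrade pairwise convergence to a common limit via the monotonicity of $\max_i\bar{x}_i$ and $\min_i\bar{x}_i$, and then invoke Lemma~\ref{lem1} on each $x_{i,1}$. Your explicit treatment of the last component $x_{i,m}$ via the ODE relation is in fact a small refinement, since Lemma~\ref{lem1} as stated only controls $r,\ldots,r^{(m-2)}$ and the paper's proof tacitly assumes this extends to $r^{(m-1)}$.
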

\begin{proof}
First, if $\mathcal{G}_{\sigma(t)}$ is uniformly jointly quasi-strongly connected, by (\ref{equ:tmp5}) and Lemma \ref{keylem}, it follows that for $\forall \bar{x}_i(0), \bar{x}_j(0)$, $\lim_{t\to\infty}(\bar{x}_i(t)-\bar{x}_j(t)) = 0$, because $\omega_i(t) = 0$ satisfies the conditions required in Lemma \ref{keylem}.

Next, it is proved that there exists $\bar{x}^*$ such that $\lim_{t\to\infty}\bar{x}_i(t) = \bar{x}^*$, $i=1,\cdots,N$ if and only if  $\lim_{t\to\infty}(\bar{x}_i(t)-\bar{x}_j(t)) = 0$, $\forall i,j=1,\cdots,N$.
The sufficiency is obvious. To prove the necessity, let $\bar{x}_{\max}(t) = \max\{\bar{x}_1(t),\cdots,\bar{x}_N(t)\}$ and $\bar{x}_{\min}(t) = \min\{\bar{x}_1(t),\cdots,\bar{x}_N(t)\}$. By (\ref{equ:tmp5}), it follows that $\dot{\bar{x}}_{\max}(t) \leq 0$ and $\dot{\bar{x}}_{\min}(t) \geq 0$, $\forall t\geq 0$. This together with $\lim_{t\to\infty}(\bar{x}_i(t)-\bar{x}_j(t)) = 0$ lead to that there exits $\bar{x}^*$ such that $\lim_{t\to\infty}\bar{x}_i(t) = \bar{x}^*$, $i=1,\cdots,N$.

If all roots of (\ref{equ:characteristicequation}) are in the open left-half complex plane, it can be obtained by Lemma \ref{lem1} that $\lim_{t\to\infty}x_i(t) = x^*$, $i=1,\cdots,N$ where $x^* = (\bar{x}^*/a_1,0,\cdots,0)^T$, which means that the proposed protocol can solve the consensus problem. In addition, ``all roots of (\ref{equ:characteristicequation}) are in the open left-half complex plane'' is also necessary for solving the consensus problem by the necessity proof of Lemma \ref{lem1}.
\end{proof}

\begin{rem}
The most interesting feature of the consensus protocol defined by (\ref{equ:protocol}) is that under the proposed protocol, the consensus problem of high-order integral multi-agent systems is equivalent with the consensus problem of first-order integral multi-agent systems. Therefore, many existing results (for example, the consensus problem with delayed communication \cite{Saber04TAC}) in the literature can be generalized to the high-order integral case directly.
\end{rem}

\begin{rem}
The group decision value $x^*$ is determined by two factors: the agents' initial states and the communication topology $\mathcal{G}_{\sigma(t)}$.
In the switching topology case, it is usually hard to give an explicit solution to $x^*$. However, if $\mathcal{G}_{\sigma(t)}$ is balanced at any time
(the graph $\mathcal{G}_{\sigma(t)}$ is called balanced if $1^TL_{\mathcal{G}_{\sigma(t)}} = 0^T_N$), then by \eqref{equ:tmp5}, $1^T_N\dot{\bar{X}}(t) = - 1^T_NL_{\mathcal{G}_{\sigma(t)}}\bar{X}(t) = 0^T_N$.
Hence $1^T_N\bar{X}(t) = 1^T_N\bar{X}(0)$. This together with $\lim_{t\to\infty}\bar{x}_i(t) = \bar{x}^*$, $i=1,\cdots,N$
leads to $\bar{x}^* = \frac{1}{N}\sum^N_{i=1}\bar{x}_i(0) = \frac{1}{N}\sum^N_{i=1}K_2{x}_i(0)$. Then $x^* = (\frac{1}{a_1N}\sum^N_{i=1}K_2{x}_i(0),0,\cdots,0)^T$.
\end{rem}

\subsection{Output-Feedback Based Consensus Algorithm}
In this subsection, it is assumed that the agent's full state is not available any more. Instead, only the agent's output can be used for the consensus protocol design, which is modeled by the following equation
\begin{equation}\label{equ:output}
y_i(t) = Cx_i(t),
\end{equation}
where $C = (c_1,\cdots,c_m)$. Here it is assumed that the pair $(A,C)$ is detectable.

Motivated by the dynamic output feedback consensus protocol proposed in \cite{Cheng11Automatica}, the following protocol is proposed
\begin{IEEEeqnarray}{ll}\label{equ:observerprotocol}
& u_i(t) = K_1s_i(t) - \sum_{j\in \mathcal{N}_i(\sigma(t))}\alpha^{ij}_{\sigma(t)}K_2(s_i(t) - s_j(t)), \IEEEyessubnumber \label{equ:observerprotocol1}\\
& \dot{s}_i(t) = (A+K_3C)s_i(t) + Bu_i(t) - K_3y_i(t), \IEEEyessubnumber \label{equ:observerprotocol2}
\end{IEEEeqnarray}
where $K_1$ and $K_2$ are defined in (\ref{equ:protocol}), $K_3 \in \mathbb{R}^{m\times 1}$ is designed in such a way that $A+K_3C$ is \emph{Hurwitz}. The basic idea behind this protocol is that design an observer (\ref{equ:observerprotocol2}) to dynamically estimate the agent's full state, and then use the estimated state to replace the actual state in (\ref{equ:protocol}), which results in (\ref{equ:observerprotocol1}).

By (\ref{equ:agentdynamicscompact}) and (\ref{equ:observerprotocol2}), it follows that
\begin{displaymath}
d(x_i(t) - s_i(t))/dt = (A+K_3C)(x_i(t) - s_i(t)).
\end{displaymath}
Then
\begin{equation}\label{equ:xssolution}
x_i(t) = s_i(t) + \exp((A+K_3C)t)(x_i(0)-s_i(0)).
\end{equation}

Since $A+K_3C$ is \emph{Hurwitz}, it can be proved that $\lim_{t\to\infty}(x_i(t)-s_i(t)) = 0$. Therefore, to prove that $x_i(t)$ $(i=1,\cdots,N)$ can reach a consensus is equivalent to prove that $s_i(t)$ $(i=1,\cdots,N)$ can reach a consensus. Substituting (\ref{equ:xssolution}) into (\ref{equ:observerprotocol2}) obtains that
\begin{multline}\label{equ:observerdynamics}
\dot{S}(t) = (I_N\otimes(A+BK_1) - L_{\mathcal{G}_{\sigma(t)}}\otimes BK_2)S(t)\\ - (I_N\otimes(K_3C\exp((A+K_3C)t)))(X(0)-S(0)),
\end{multline}
where $S(t) = (s^T_1(t),\cdots,s^T_N(t))^T$ and $X(t)$ is defined in (\ref{equ:tmp4}).

Similarly, let $\bar{S}(t)  = (I_N\otimes K_2)S(t) \equiv (\bar{s}_1(t),\cdots,\bar{s}_N(t))^T \in \mathbb{R}^N$, then multiplying $(I_N\otimes K_2)$ at the both sides of (\ref{equ:observerdynamics}) obtains that
\begin{multline}\label{equ:observerdynamics}
\dot{\bar{S}}(t) =  - (I_N\otimes(K_2K_3C\exp((A+K_3C)t)))(X(0)-S(0)) \\-L_{\mathcal{G}_{\sigma(t)}}\bar{S}(t) \equiv -L_{\mathcal{G}_{\sigma(t)}}\bar{S}(t) + \bar{\omega}(t),
\end{multline}
where $\bar{\omega}(t) = (\bar{\omega}_1(t),\cdots,\bar{\omega}_N(t))^T$ and $\bar{\omega}_i(t) = (K_2K_3C\exp((A+K_3C)t))(x_i(0)-s_i(0))$.

From (\ref{equ:observerdynamics}), the original consensus problem has been transformed into the robust consensus problem of first-order integral multi-agent systems. Similar with the method employed in Theorem \ref{thm:main}, the following result shows that the output-feedback based consensus protocol defined by (\ref{equ:observerprotocol}) can also solve the consensus problem under the same conditions of Theorem \ref{thm:main}.

\begin{thm}\label{thm:main2}
Assume that $A+K_3C$ is \emph{Hurwitz}. The proposed protocol defined by (\ref{equ:observerprotocol}) can solve the consensus problem of high-order integral multi-agent systems if all roots of (\ref{equ:characteristicequation}) are in the open left-half complex plane and the communication topology $\mathcal{G}_{\sigma(t)}$ is uniformly jointly quasi-strongly connected.
\end{thm}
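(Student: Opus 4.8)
The plan is to mirror the proof of Theorem~\ref{thm:main}: treat (\ref{equ:observerdynamics}) as a robustly perturbed first-order consensus system, invoke Lemma~\ref{keylem} to make the disagreement vanish, upgrade this to convergence to a common value, and then pull the conclusion back to the high-order states via Lemma~\ref{lem1}.

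First I would bound the perturbation. Since $A+K_3C$ is \emph{Hurwitz}, there are constants $\kappa,\lambda>0$ with $\|\exp((A+K_3C)t)\|_2\le\kappa e^{-\lambda t}$; as the index set $\mathcal{S}$ is finite the weights $\alpha^{ij}_{\sigma(t)}$ are uniformly bounded, so the forcing signal $\bar\omega_i(t)=\big(K_2K_3C\exp((A+K_3C)t)\big)(x_i(0)-s_i(0))$ appearing in (\ref{equ:observerdynamics}) satisfies $|\bar\omega_i(t)|\le M e^{-\lambda t}$ for a suitable $M$. In particular $\sup_{t\ge0}|\bar\omega_i(t)|<\infty$, $\lim_{t\to\infty}\bar\omega_i(t)=0$ and $\int_0^\infty|\bar\omega_i(t)|\,dt<\infty$; also, by (\ref{equ:xssolution}), $x_i(t)-s_i(t)=\exp((A+K_3C)t)(x_i(0)-s_i(0))\to0$.

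Next, applying Lemma~\ref{keylem} to (\ref{equ:observerdynamics}) with $\omega_i:=\bar\omega_i$: since $\mathcal{G}_{\sigma(t)}$ is uniformly jointly quasi-strongly connected and the disturbance conditions above hold, $\lim_{t\to\infty}(\bar s_i(t)-\bar s_j(t))=0$ for all $i,j$. To promote this to $\bar s_i(t)\to\bar s^*$ for a common $\bar s^*$, the monotonicity argument of Theorem~\ref{thm:main} must be corrected for the forcing term: with $\bar s_{\max}(t)=\max_i\bar s_i(t)$, $\bar s_{\min}(t)=\min_i\bar s_i(t)$ and $\beta(t)=\max_i|\bar\omega_i(t)|$, one checks $D^+\bar s_{\max}(t)\le\beta(t)$ and $D_+\bar s_{\min}(t)\ge-\beta(t)$ for a.e.\ $t$, so $\bar s_{\max}(t)-\int_0^t\beta(\tau)\,d\tau$ is nonincreasing and bounded below while $\bar s_{\min}(t)+\int_0^t\beta(\tau)\,d\tau$ is nondecreasing and bounded above (using $\int_0^\infty\beta<\infty$); hence $\bar s_{\max}$ and $\bar s_{\min}$ both converge, and together with $\bar s_{\max}(t)-\bar s_{\min}(t)\to0$ this gives $\bar s_i(t)\to\bar s^*$ for every $i$.

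Finally I would transfer to the original states. From $x_i(t)-s_i(t)\to0$ and $\bar s_i(t)=K_2s_i(t)\to\bar s^*$ we obtain $\bar x_i(t):=K_2x_i(t)\to\bar s^*=:\bar x^*$. Because (\ref{equ:agentdynamics}) is a chain of integrators, $\bar x_i(t)=x^{(m-1)}_{i,1}(t)+a_{m-1}x^{(m-2)}_{i,1}(t)+\cdots+a_1x_{i,1}(t)$ is exactly the left-hand side of (\ref{equ:lemdifeqn}) with $r=x_{i,1}$ and $f=\bar x_i$, a continuous function with limit $\bar x^*$; since all roots of (\ref{equ:characteristicequation}) lie in the open left-half plane, Lemma~\ref{lem1} yields $(x_{i,1}(t),\dots,x_{i,m-1}(t))\to(\bar x^*/a_1,0,\dots,0)$, and then $x_{i,m}(t)=x^{(m-1)}_{i,1}(t)=\bar x_i(t)-\sum_{k=1}^{m-1}a_kx^{(k-1)}_{i,1}(t)\to0$. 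Hence $x_i(t)\to x^*:=(\bar x^*/a_1,0,\dots,0)^T$ for all $i$, i.e.\ consensus is reached. The only step I expect to be non-routine is the convergence-to-a-common-value claim: unlike in Theorem~\ref{thm:main}, $\bar s_{\max}$ and $\bar s_{\min}$ are no longer monotone, and one must exploit the integrability of the exponentially decaying $\bar\omega_i$ to recover convergence (equivalently, one could substitute $s_i=x_i-\exp((A+K_3C)t)(x_i(0)-s_i(0))$ in (\ref{equ:observerprotocol1}) to reduce (\ref{equ:agentdynamicscompact}) to the state-feedback closed loop of Theorem~\ref{thm:main} driven by an exponentially decaying input, but the same point has to be handled there as well).
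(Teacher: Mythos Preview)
Your argument is correct, and the overall scaffolding (bound $\bar\omega_i$, apply Lemma~\ref{keylem} to (\ref{equ:observerdynamics}), upgrade to convergence, then invoke Lemma~\ref{lem1}) matches the paper's. The genuine difference is in the step you flagged as non-routine: how to pass from $\bar s_i(t)-\bar s_j(t)\to0$ to $\bar s_i(t)\to\bar s^*$. The paper does \emph{not} use your Dini-derivative/compensated-monotonicity trick. Instead it works with the variation-of-constants formula $\bar S(t)=\Phi_{\bar S}(t,0)\bar S(0)+\int_0^t\Phi_{\bar S}(t,s)\bar\omega(s)\,ds$: it first argues that each $\exp(-L_{\mathcal G_{\sigma(t_i)}}\Delta)$ is stochastic, so $\Phi_{\bar S}(t,s)$ is stochastic and uniformly bounded in Frobenius norm; then, since the homogeneous system reaches consensus under the connectivity hypothesis, $\Phi_{\bar S}(t,0)$ converges as $t\to\infty$; finally a Cauchy-criterion estimate on $\bar S(t_1)-\bar S(t_2)$, splitting the integral at a time $T_1$ beyond which $\int_{T_1}^\infty\|\bar\omega\|<\epsilon$, yields convergence of $\bar S(t)$. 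Your route is more elementary---it stays at the level of scalar comparison and never touches the transition matrix or products of stochastic matrices---and it makes transparent that only $\int_0^\infty\beta<\infty$ is needed. The paper's route, by contrast, isolates a reusable structural fact (boundedness and convergence of $\Phi_{\bar S}$) that it recycles verbatim in the discrete-time proof of Lemma~\ref{lem2}. As a minor bonus, your closing line handling $x_{i,m}(t)\to0$ explicitly is cleaner than the paper's, which cites Lemma~\ref{lem1} for all of $x_i$ even though that lemma only speaks to $r,\dots,r^{(m-2)}$.
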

\begin{proof} First, let $\Phi_{\bar{S}}(t,s)$ be the state transition matrix of (\ref{equ:observerdynamics}). Due to Assumption \ref{assumption1}, $\Phi_{\bar{S}}(t,s)$ can be written as
\begin{multline}
\Phi_{\bar{S}}(t,s) = \exp(-L_{\mathcal{G}_{\sigma(t_p)}}(t-t_p))\exp(-L_{\mathcal{G}_{\sigma(t_{p-1})}}(t_p-t_{p-1}))\\\times\cdots\times\exp(-L_{\mathcal{G}_{\sigma(t_{q})}}(t_{q+1}-s)), \; t\geq s,
\end{multline}
where $t_p$ is the largest time switching point smaller than $t$ and $t_q$ is the largest time switching point smaller than $s$. In \cite{Ren04Chapter}, it has been proved that for any $\Delta \geq 0$, $\exp(-L_{\mathcal{G}_{\sigma(t_i)}}\Delta)$ is a stochastic matrix. Then $\Phi_{\bar{S}}(t,s)$ is also a stochastic matrix. Therefore, $\|\Phi_{\bar{S}}(t,s)\|_F\leq N^2$, $\forall t \geq s$.

If $\mathcal{G}_{\sigma(t)}$ is uniformly jointly quasi-strongly connected, by the same analysis in the proof of Theorem \ref{thm:main}, the solution to the homogeneous differential equation associated with (\ref{equ:observerdynamics}), $\dot{\bar{S}}(t) =  -L_{\mathcal{G}_{\sigma(t)}}\bar{S}(t)$, is convergent under any initial states. Therefore, there exists a matrix $\Phi^*_{\bar{S}}$ such that $\lim_{t\to\infty}\Phi_{\bar{S}}(t,0) = \Phi^*_{\bar{S}}$.

The solution to  (\ref{equ:observerdynamics}) can be written as
\begin{equation}\label{equ:observersolution}
\bar{S}(t) = \Phi_{\bar{S}}(t,0)\bar{S}(0) + \int^t_0\Phi_{\bar{S}}(t,s)\bar{\omega}(s)ds.
\end{equation}

Next, let $\mu_1,\cdots,\mu_m$ be the eigenvalues of $A+K_3C$. Then $\bar{\omega}_k(t)$ in (\ref{equ:observerdynamics}) is the linear combination of $t^je^{\mu_i t}$, $i,j=1,\cdots,m$. Since the real parts of $\mu_1,\cdots,\mu_m$ are all in the open left-half complex plane, it can be proved that $\lim_{t\to\infty}t^je^{\mu_it} = 0$ and $\int^{\infty}_0t^je^{\mu_it}dt < \infty$. Then $\sup_{t\in[0,\infty)}|\omega_k(t)| < \infty$, $\lim_{t\to\infty}\bar{\omega}_k(t) = 0$ and $\int^{\infty}_0|\bar{\omega}_k(t)|dt < \infty$, $k=1,\cdots,N$. It is assumed that $\int^{\infty}_0\|\bar{\omega}_(t)\|_2dt < M$. Then for any $\epsilon > 0$, there exists $T_1$ such that $\int^{\infty}_{T_1}\|\bar{\omega}(t)\|_2dt < \epsilon$. Due to the convergence of $\Phi_{\bar{S}}(t,0)$, for any $\epsilon > 0$, there exists $T_2 > T_1$ such that for any $t_1\geq t_2\geq T_2$, $\|\Phi_{\bar{S}}(t_1,T_1) - \Phi_{\bar{S}}(t_2,T_1)\|_F < \epsilon$. Then the convergence of $\bar{S}(t)$ can be proved by the Cauchy's Convergence Theorem. That is
\begin{IEEEeqnarray}{lrl}\label{equ:tmp6}
& \|\bar{S}(t_1) - \bar{S}(t_2)\|_2 \leq & \|\Phi_{\bar{S}}(t_1,0)-\Phi_{\bar{S}}(t_2,0)\|_F\|\bar{S}(0)\|_2 + \int^{T_1}_0\|\Phi_{\bar{S}}(t_1,s)-\Phi_{\bar{S}}(t_2,s)\|_F\|\bar{\omega}(s)\|_2ds \IEEEnonumber\\
&&+\int^{t_1}_{T_1}\|\Phi_{\bar{S}}(t_1,s)-\Phi_{\bar{S}}(t_2,s)\|_F\|\bar{\omega}(s)\|_2ds + \int^{t_2}_{t_1}\|\Phi_{\bar{S}}(t_2,s)\|_F\|\bar{\omega}(s)\|_2ds\IEEEnonumber\\
& \leq & \|\bar{S}(0)\|_2N^2\epsilon + 2N^2M\epsilon + 2N^2\epsilon + N^2\epsilon.
\end{IEEEeqnarray}

Since $\lim_{t\to\infty}\bar{\omega}_k(t) = 0$, $k=1,\cdots,N$, by Lemma \ref{keylem}, it follows that $\lim_{t\to\infty}(\bar{s}_i(t)-\bar{s}_j(t)) = 0$ if $\mathcal{G}_{\sigma(t)}$ is uniformly jointly quasi-strongly connected. This together with the convergence of $\bar{S}(t)$ lead to that there exists $\bar{s}^*$ such that $\lim_{t\to\infty}\bar{s}_i(t) = \bar{s}^*$, $i=1,\cdots,N$. Because $A+K_3C$ is \emph{Hurwitz}, $\lim_{t\to\infty}K_2x_i(t) = \lim_{t\to\infty}K_2s_i(t) + \lim_{t\to\infty}K_2\exp((A+K_3C)t)(x_i(0)-s_i(0)) = \lim_{t\to\infty}\bar{s}_i(t) = \bar{s}^*$, $\forall i=1,\cdots,N$.
If all roots of (\ref{equ:characteristicequation}) are in the open left-half complex plane, by Lemma \ref{lem1}, it can be obtained that $\lim_{t\to\infty}x_i(t)  = (\bar{s}^*/a_1,0,\cdots,0)$, $i=1,\cdots,N$, which closes the proof of this theorem.
\end{proof}

\subsection{General Linear Time-Invariant Multi-Agent Systems}\label{sub:generallinear}
Consider the multi-agent system composed of the following general continuous-time linear time-invariant dynamical agent
\begin{equation}\label{equ:generallinearity}
\dot{x}_i(t) = A_gx_i(t) + B_gu_i(t),\quad y_i(t) = C_gx_i(t), \quad i=1,\cdots,N,
\end{equation}
where $x_i(t) \in \mathbb{R}^m$, $A_g \in \mathbb{R}^{m\times m}$, $B_g \in \mathbb{R}^{m}$, and $C_g \in \mathbb{R}^{m}$ ($C_g$ is a row vector).

If the pair $(A_g,B_g)$ is controllable and the pair $(A_g,C_g)$ is detectable, by Luenberger controllable canonical, there exists a matrix $T_g \in \mathbb{R}^{m\times m}$ such that
\begin{equation}
\bar{A}_g = T_gA_gT^{-1}_g = \begin{pmatrix}
0 & 1 & 0 & \cdots & 0\\
0 & 0 & 1 & \cdots & 0\\
\vdots & \vdots & \vdots & \ddots & \vdots\\
0 & 0 & 0 & \cdots & 1\\
a^1_g & a^2_g & a^3_g & \cdots & a^m_g
\end{pmatrix},\; \bar{B}_g = T_gB_g = \begin{pmatrix} 0\\0\\\vdots\\0\\1
\end{pmatrix}.
\end{equation}
Let $a_g = (a^1_g,a^2_g,\cdots,a^m_g)^T$. Then, by the similar proof of Theorem \ref{thm:main2}, the following protocol can solve the consensus problem of general linear multi-agent systems (\ref{equ:generallinearity}) if all roots of (\ref{equ:characteristicequation}) are in the open left-half complex plane and the communication topology $\mathcal{G}_{\sigma(t)}$ is uniformly jointly quasi-strongly connected.
\begin{IEEEeqnarray}{lll}\label{equ:tmp7}
&u_i(t) = -a^T_gT^{-1}_gs_i(t) + K_1T^{-1}_gs_i(t) - \sum_{j\in \mathcal{N}_i(\sigma(t))}\alpha^{ij}_{\sigma(t)}K_2T^{-1}_g(s_i(t) - s_j(t)), \IEEEyessubnumber  \\
& \dot{s}_i(t) = (A_g+K_3C_g)s_i(t) + B_gu_i(t) - K_3y_i(t), \IEEEyessubnumber
\end{IEEEeqnarray}
where $K_1$ and $K_2$ are defined in (\ref{equ:protocol}), and $K_3$ satisfies that $A_g+K_3C_g$ is \emph{Hurwitz}.

\section{Extensions to Discrete-Time High-Order Integral Multi-Agent Systems}\label{discrete}

The results obtained in the continuous-time domain can also be extended to the discrete-time case. Assume that the $i$th agent is described by the following discrete-time high-order model
\begin{IEEEeqnarray}{lll}\label{equ:discretegenerallinearity}
&{x}_i[k+1] = Ax_i[k] + Bu_i[k],\IEEEyessubnumber\\
&y_i[k] = Cx_i[k],\IEEEyessubnumber
\end{IEEEeqnarray}
where $x_i[k] = (x_{i,1}[k],\cdots,x_{i,m}[k])^T \in \mathbb{R}^m$, $u_i[k] \in \mathbb{R}$ and $y_i[k] \in \mathbb{R}$ are the agent's state, input and output at the $k$th step, respectively. The system matrices $A$, $B$ and $C$ are defined same as the ones in \eqref{equ:agentdynamicscompact} and \eqref{equ:output}.

The dynamic-output feedback based consensus protocol is proposed as follows
\begin{IEEEeqnarray}{ll}\label{equ:discreteobserverprotocol}
& u_i[k] = K_4z_i[k] - \frac{1}{1+\sum_{j\in \mathcal{N}_i[\sigma[k]]}\alpha^{ij}_{\sigma[k]}}\sum_{j\in \mathcal{N}_i[\sigma[k]]}\alpha^{ij}_{\sigma[k]}K_5(z_i[k] - z_j[k]), \IEEEyessubnumber \label{equ:discreteobserverprotocol1}\\
& {z}_i[k+1] = (A+K_6C)z_i[k] + Bu_i[k] - K_6y_i[k], \IEEEyessubnumber \label{equ:discreteobserverprotocol2}
\end{IEEEeqnarray}
where $\sigma[k]$ denotes the index of the graph representing the agent's communication topology at the $k$th step; $\mathcal{G}_{\sigma[k]}$, $\mathcal{N}_i[\sigma[k]]$ and $\alpha^{ij}_{\sigma[k]}$ have the similar meanings as the ones defined in the continuous-time case; $K_4 = (b_1,b_2-b_1,\cdots,b_{m-1}-b_{m-2}, 1 - b_{m-1})\in \mathbb{R}^{1\times m}$, $K_5 = (b_1,b_2,\cdots,b_{m-1},1) \in \mathbb{R}^{1\times m}$ and $K_6 \in \mathbb{R}^{m\times 1}$ is designed in such a way that all eigenvalues of $A+K_6C$ are in the unit circle.

By \eqref{equ:discreteobserverprotocol2}, it follows that
\begin{equation}
{z}_i[k+1]  - {x}_i[k+1] = (A+K_6C)({z}_i[k]  - {x}_i[k]),
\end{equation}
which results in
\begin{equation}\label{equ:discreteobserve}
{z}_i[k]  = (A+K_6C)^k({z}_i[0]  - {x}_i[0]) + {x}_i[k].
\end{equation}

Then substituting \eqref{equ:discreteobserverprotocol1} and \eqref{equ:discreteobserve} into \eqref{equ:discreteobserverprotocol2} obtains that
\begin{multline}\label{equ:discreteagentdynamicscompact}
Z[k+1] = (I_N\otimes(A+BK_4))Z[k]-(S[k]\otimes BK_5)Z[k]\\+ (I_N\otimes(K_6C(A+K_6C)^k))(Z[0]-X[0]),
\end{multline}
where $Z[k] = (z^T_1[k],\cdots,z^T_N[k])^T$, the $(i,i)$-entry of $S[k]$ is $({\sum_{j\in \mathcal{N}_i[\sigma[k]]}\alpha^{ij}_{\sigma[k]}})/({1+\sum_{j\in \mathcal{N}_i[\sigma[k]]}\alpha^{ij}_{\sigma[k]}})$ and the $(i,j)$-entry ($i\neq j$) of $S[k]$ is $-({\alpha^{ij}_{\sigma[k]}})/({1+\sum_{j\in \mathcal{N}_i[\sigma[k]]}\alpha^{ij}_{\sigma[k]}})$.

Similar with the continuous-time case, multiplying $I_N\otimes K_5$ at both sides of \eqref{equ:discreteagentdynamicscompact} obtains that
\begin{IEEEeqnarray}{lll}\label{equ:distransformed}
\bar{Z}[k+1] = (I_N - S[k])\bar{Z}[k] + \omega_d[k],
\end{IEEEeqnarray}
where $\omega_d[k] \equiv (I_N\otimes(K_5K_6C(A+K_6C)^k))(Z[0]-X[0])$, $\bar{Z}[k] \equiv (\bar{z}_1[k],\cdots,\bar{z}_N[k])^T$ and $\bar{z}_i[k] = K_5z_i[k]$ $(i=1,\cdots,N)$.

From \eqref{equ:distransformed}, it can be seen that the original discrete-time high-order integral multi-agent system has been transformed into a discrete-time first-order integral multi-agent system with a vanishing disturbance. Similar with the continuous-time case, in the following section, this reduced-order multi-agent system is proved to reach a consensus first, and then the consensus of the original multi-agent system can be ensured provided certain condition is applied to $(b_1,\cdots,b_{m-1})$.

\begin{lem}\label{lem2}
The discrete-time first-order integral multi-agent system defined by \eqref{equ:distransformed} can reach a consensus if
$\mathcal{G}_{\sigma[k]}$ is uniformly jointly quasi-strongly connected in the discrete-time sense.
($\mathcal{G}_{\sigma[k]}$ is called to be uniformly jointly quasi-strongly connected in the discrete-time sense if there exists a constant $M>0$ such that the union graph $(\mathcal{V}_\mathcal{G}, \bigcup_{i\in\{k,k+1,\cdots,k+M\}}\mathcal{E}_{\mathcal{G}_\sigma[i]})$ is quasi-strongly connected for any $k>0$.)
\end{lem}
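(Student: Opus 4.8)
The plan is to view \eqref{equ:distransformed} as the discrete-time counterpart of the robust first-order consensus problem treated in Theorem \ref{thm:main2}, and to follow essentially the same line of argument. First I would record the structural properties of the coefficient matrix $P[k]:=I_N-S[k]$. From the definition of $S[k]$ one computes that the $(i,i)$-entry of $P[k]$ equals $1/(1+\sum_{j\in\mathcal N_i[\sigma[k]]}\alpha^{ij}_{\sigma[k]})>0$, the $(i,j)$-entry with $i\neq j$ equals $\alpha^{ij}_{\sigma[k]}/(1+\sum_{j\in\mathcal N_i[\sigma[k]]}\alpha^{ij}_{\sigma[k]})$ (which is positive exactly when $e_{ij}\in\mathcal E_{\mathcal G_{\sigma[k]}}$), and each row of $P[k]$ sums to $1$; thus $P[k]$ is row-stochastic with a strictly positive diagonal, and its off-diagonal zero pattern is that of the adjacency matrix of $\mathcal G_{\sigma[k]}$. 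Because $\mathcal S$ is finite and the weights $\alpha^{ij}_{\sigma[k]}$ are fixed positive constants, there is a uniform $\eta\in(0,1)$ such that every diagonal entry of $P[k]$ is at least $\eta$ and every nonzero off-diagonal entry is at least $\eta$, for all $k$. I would also note that $\omega_d[k]$ is a vanishing, absolutely summable perturbation: since every eigenvalue of $A+K_6C$ lies strictly inside the unit circle, $\|(A+K_6C)^k\|_2\leq c\,k^{m-1}\rho^k$ for some $c>0$ and $\rho\in(0,1)$, so $\|\omega_d[k]\|_2\to0$ and $\sum_{k\geq0}\|\omega_d[k]\|_2<\infty$.

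Next I would prove that the disagreement $\delta[k]:=\max_i\bar z_i[k]-\min_i\bar z_i[k]$ tends to zero. For the disturbance-free recursion $\bar Z[k+1]=P[k]\bar Z[k]$ each component of $\bar Z[k+1]$ is a convex combination of those of $\bar Z[k]$, so $\delta[k]$ is non-increasing, and by the standard theory of backward products of stochastic matrices with uniformly positive diagonals (for instance, \cite{Ren05TAC}), discrete-time uniform joint quasi-strong connectivity forces a uniform contraction: there exist an integer $L$ and $\gamma\in(0,1)$ with $\delta[k+L]\leq(1-\gamma)\delta[k]$ for every $k$. Reintroducing the disturbance in \eqref{equ:distransformed} and writing $\bar Z[k+L]=\Phi[k+L,k]\bar Z[k]+\sum_{l=k}^{k+L-1}\Phi[k+L,l+1]\,\omega_d[l]$, where $\Phi[\cdot,\cdot]$ is the (row-stochastic) transition matrix of the homogeneous part, and using that the spread functional $\delta$ is subadditive and non-increasing under a stochastic matrix, one obtains $\delta[k+L]\leq(1-\gamma)\delta[k]+2\sum_{l=k}^{k+L-1}\|\omega_d[l]\|_2$. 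Iterating this along $k\in\{0,L,2L,\dots\}$ (a summable sequence convolved with a geometric one tends to zero) and filling in the intermediate indices via the crude single-step bound $\delta[k+1]\leq\delta[k]+2\|\omega_d[k]\|_2$, I would conclude $\delta[k]\to0$. Equivalently, this step can be packaged as a discrete-time analogue of Lemma \ref{keylem}.

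Finally I would upgrade the vanishing disagreement to the existence of a common limit, as in the proof of Theorem \ref{thm:main}. From $\bar z_i[k+1]=\sum_jP_{ij}[k]\bar z_j[k]+(\omega_d[k])_i$ together with $\sum_jP_{ij}[k]=1$ one gets $\bar z_{\max}[k+1]\leq\bar z_{\max}[k]+\|\omega_d[k]\|_2$ and $\bar z_{\min}[k+1]\geq\bar z_{\min}[k]-\|\omega_d[k]\|_2$; hence $\bar z_{\max}[k]+\sum_{l\geq k}\|\omega_d[l]\|_2$ is non-increasing and bounded below, while $\bar z_{\min}[k]-\sum_{l\geq k}\|\omega_d[l]\|_2$ is non-decreasing and bounded above, so both converge, and therefore so do $\bar z_{\max}[k]$ and $\bar z_{\min}[k]$. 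Since $\delta[k]\to0$ they share the same limit $\bar z^*$, and as $\bar z_{\min}[k]\leq\bar z_i[k]\leq\bar z_{\max}[k]$ for all $i$, we get $\lim_{k\to\infty}\bar z_i[k]=\bar z^*$ for every $i=1,\dots,N$, which is the claimed consensus. I expect the main obstacle to be the contraction step in the second paragraph: one must verify carefully that discrete-time uniform joint quasi-strong connectivity, combined with the uniformly positive diagonal entries of $P[k]$, yields a contraction factor for $\delta$ that is independent of $k$, and that the geometrically decaying, summable disturbance $\omega_d[k]$ cannot keep the contracted disagreement away from zero; once this is established, the remaining steps are routine estimates with geometric and summable tails.
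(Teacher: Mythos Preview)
Your proposal is correct and follows essentially the same approach as the paper: both verify that $I_N-S[k]$ is row-stochastic with uniformly positive diagonal, invoke the standard contraction property of products of such matrices under uniform joint quasi-strong connectivity (the paper via Seneta's ergodicity coefficient $\tau$ and the cited results in \cite{Ren05TAC,Wang09SIC}, you via a block contraction of the spread $\delta$ over $L$ steps), combine this with the geometric decay and summability of $\omega_d[k]$ to force $\delta[k]\to 0$, and then upgrade to a common limit. The only noteworthy difference is in this final step: the paper uses a Cauchy-criterion argument on the variation-of-constants formula $\bar Z[k]=\Phi_d(k,0)\bar Z[0]+\sum_{i=0}^{k-1}\Phi_d(k,i)\omega_d[i]$, whereas you use the more elementary observation that $\bar z_{\max}[k]+\sum_{l\geq k}\|\omega_d[l]\|_2$ is non-increasing and $\bar z_{\min}[k]-\sum_{l\geq k}\|\omega_d[l]\|_2$ is non-decreasing; both are standard and equivalent in strength here.
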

\begin{proof}
Let the state transition matrix of \eqref{equ:distransformed} be
\begin{equation}\label{equ:distransitionmatrix}
\Phi_d(j,i) = \prod^{j-1}_{k=i}(I_N - S[k])\;\; \forall j>i, \quad \Phi_d(i,i) = I_N.
\end{equation}
It is easy to verify that $\forall k=0,1,\cdots$, $I_N-S[k]$ is a stochastic matrix with positive diagonal elements. Therefore, $\Phi_d(j,i)$ is also a stochastic matrix, which implies that $\|\Phi_d(j,i)\|_F < N^2$, $\forall j \geq i \geq 0$. Then the solution to \eqref{equ:distransformed} is
\begin{equation}\label{equ:distransitionmatrix}
\bar{Z}[k] = \Phi_d(k,0)\bar{Z}[0] + \sum^{k-1}_{i=0}\Phi_d(k,i)\omega_d[i].
\end{equation}
Next, two mathematical operators are defined, which play a key role in analyzing the convergence of $\bar{Z}[k]$. For a stochastic matrix $P=\{P_{ij}\} \in \mathbb{R}^{N\times N}$, define $\tau(P) = 0.5\max_{i,j}\sum^N_{s=1}|P_{is} - P_{js}|$. For a vector $p = (p_1,\cdots,p_N)^T \in \mathbb{R}^{N}$, define $\Delta p = \max_{i,j}|p_i-p_j|$. By \cite{Seneta81Book}, for any vector $r_1\in \mathbb{R}^N$ and stochastic matrix $P \in \mathbb{R}^{N\times N}$, if $r_2 = Pr_1$, then $\Delta r_2 \leq \tau(P)\Delta r_1$. Therefore it can be obtained that
\begin{equation}
\Delta\bar{Z}[k] \leq \tau(\Phi_d(k,0))\Delta\bar{Z}[0] + \sum^{k-1}_{i=0}\tau(\Phi_d(k,i))\Delta\omega_d[i].
\end{equation}

By Lemma 3.9 in \cite{Ren05TAC} and Theorem 3.1 in \cite{Wang09SIC}, the following properties hold for the state transition matrix $\Phi_d(\cdot,\cdot)$. That is: if the communication topology $\mathcal{G}_{\sigma[k]}$ is uniformly jointly quasi-strongly connected in the discrete-time sense, then
\begin{itemize}
\item there exists a vector $c \in \mathbb{R}^N$ satisfying $1^T_Nc = 1$ such that $\lim_{t\to\infty}\Phi_d(t,0) = 1_Nc^T$;
\item for any integers $k, h \geq 0$, there exist $C_M > 0$ and $0 < \lambda_{\Phi} < 1$ such that $\tau(\Phi_d(k+h,k)) \leq C_M\lambda^h_{\Phi}$.
\end{itemize}

Let the maximal magnitude eigenvalue of $A+K_6C$ be $\lambda_{max}$. Since all eigenvalues of $A+K_6C$ are in the unit circle, then $|\lambda_{max}| < 1$. And there must exist a constant $C_{e}$ such that $\Delta(\omega_d[k]) = \Delta((I_N\otimes(K_5K_6C(A+K_6C)^k))(Z[0]-X[0])) < C_e|\lambda_{max}|^k$. Therefore,
\begin{IEEEeqnarray}{lll}
\sum^{k-1}_{i=0}\tau(\Phi_d(k,i))\Delta\omega_d[i] \leq \sum^{k-1}_{i=0}C_MC_e\lambda^{k-i}_{\Phi}|\lambda_{max}|^i \leq (k-1)C_MC_e|\lambda_{min}|^{k},
\end{IEEEeqnarray}
where $|\lambda_{min}| = \max\{\lambda_{\Phi},|\lambda_{max}|\} < 1$. Then
\begin{equation}
\lim_{k\to\infty}\sum^{k-1}_{i=0}\tau(\Phi_d(k,i))\Delta\omega_d[i] = 0,
\end{equation}
which results in that
\begin{equation}\label{equ:limitbarZ}
\lim_{k\to\infty}\Delta\bar{Z}[k] \leq \lim_{k\to\infty}\tau(\Phi_d(k,0))\Delta\bar{Z}[0] + \lim_{k\to\infty}\sum^{k-1}_{i=0}\tau(\Phi_d(k,i))\Delta\omega_d[i] = 0.
\end{equation}
By the definition of the operator $\Delta$, it follows that $\forall k\geq 0$, $\Delta\bar{Z}[k] \geq 0$. This together with \eqref{equ:limitbarZ} leads to
\begin{equation}\label{equ:distmp2}
\lim_{k\to\infty}(\bar{z}_i[k] - \bar{z}_j[k]) = 0, \quad \forall 1\leq i,j \leq N.
\end{equation}

Similar with Theorem \ref{thm:main2}, the convergence of $\bar{Z}[k]$ is proved in the following part. From one side, since all eigenvalues of $A+K_3C$ are in the unit circle, it is easy to prove that there exists a constant $M_d$ such that $\sum^{\infty}_{i=0}\|\omega_d[i]\|_2 < M_d$. Therefore, $\forall \epsilon > 0$, there exists $M_1 > 0$ such that $\sum^{\infty}_{i=K_1}\|\omega_d[i]\|_2 < \epsilon$. From the other side, because of the convergence of $\Phi_d(k,0)$, $\forall \epsilon > 0$, there exists $M_2 > M_1$ such that for any $k_1\geq k_2\geq M_2$, $\|\Phi_{d}(k_1,M_1) - \Phi_{d}(k_2,M_1)\|_F < \epsilon$. By the above two facts, the convergence of $\bar{Z}(t)$ can also be proved by the Cauchy's Convergence Theorem. That is
\begin{IEEEeqnarray}{lrl}\label{equ:distmp1}
& \|\bar{Z}[k_1] - \bar{Z}[k_2]\|_2 \leq & \|\Phi_{d}(k_1,0)-\Phi_{d}(k_2,0)\|_F\|\bar{Z}[0]\|_2 + \sum^{M_1-1}_{s=0}\|\Phi_{d}(k_1,s)-\Phi_{d}(k_2,s)\|_F\|{\omega}_d[s]\|_2 \IEEEnonumber\\
&&+\sum^{k_1-1}_{s=M_1}\|\Phi_{d}(k_1,s)-\Phi_{d}(k_2,s)\|_F\|{\omega}_d[s]\|_2 + \sum^{k_2}_{s=k_1}\|\Phi_{d}(k_2,s)\|_F\|{\omega}_d[s]\|_2\IEEEnonumber\\
& \leq & 2\|\bar{Z}[0]\|_2N^2\epsilon + 2N^2M_d\epsilon + 2N^2\epsilon + N^2\epsilon.
\end{IEEEeqnarray}
From \eqref{equ:distmp2} and \eqref{equ:distmp1}, it can be obtained that there exists $\bar{z}^*$ such that $\lim_{k\to\infty}|\bar{z}_i - \bar{z}^*| = 0$, $i=1,\cdots,N$.
\end{proof}

The next Lemma bridges the discrete-time first-order integral multi-agent system and the original high-order multi-agent system, which plays a same role as its continuous-time counterpart (Lemma \ref{lem1}).
\begin{lem}\label{lem:discrete}
Consider the following non-homogeneous linear difference equation
\begin{equation}\label{equ:disdifeqn}
r[k+m-1] + b_{m-1}r[k+m-2] + \cdots + b_2r[k+1] + b_1r[k] = f[k],
\end{equation}
where $r[k] \in \mathbb{R}$ and $f[k]$ is a convergent sequence satisfying $\lim_{k\to\infty}f[k] = f^*$. Let the characteristic equation associated with (\ref{equ:lemdifeqn}) be
\begin{equation}\label{equ:discharacteristicequation}
s^{m-1} + b_{m-1}s^{m-2} + \cdots b_2s + b_1 = 0.
\end{equation}
Then,
for any initial state $(r[0],r[1],\cdots,r[m-2])$, $r[k]$ is convergent to $f^*/(b_1+\cdots+b_{m-1}+1)$ if all roots of (\ref{equ:discharacteristicequation}) are in the unit circle.
\end{lem}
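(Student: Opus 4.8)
The plan is to recast the scalar difference equation \eqref{equ:disdifeqn} in companion (state-space) form and then exploit the Schur stability of the companion matrix. Introduce the state vector $R[k] = (r[k], r[k+1], \ldots, r[k+m-2])^T \in \mathbb{R}^{m-1}$. Then \eqref{equ:disdifeqn} is equivalent to $R[k+1] = FR[k] + e f[k]$, where $e = (0,\ldots,0,1)^T \in \mathbb{R}^{m-1}$ and $F \in \mathbb{R}^{(m-1)\times(m-1)}$ is the companion matrix with ones on the super-diagonal and last row $(-b_1,-b_2,\ldots,-b_{m-1})$. A direct expansion shows that the characteristic polynomial of $F$ coincides with the left-hand side of \eqref{equ:discharacteristicequation}, so by hypothesis the spectral radius $\rho(F)$ is strictly less than one. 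Consequently $I_{m-1}-F$ is invertible and there exist constants $C>0$ and $0<\lambda<1$ such that $\|F^k\|_2 \le C\lambda^k$ for all $k\ge 0$ (for instance via the Jordan form of $F$, or Gelfand's formula).

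By the variation-of-constants formula the solution is $R[k] = F^k R[0] + \sum_{i=0}^{k-1} F^{k-1-i} e f[i]$. The term $F^k R[0]$ tends to $0$ since $\|F^k R[0]\|_2 \le C\lambda^k\|R[0]\|_2$. For the convolution term, decompose $f[i] = f^* + g[i]$ with $g[i]\to 0$. The constant part equals $\left(\sum_{j=0}^{k-1} F^{j}\right)e f^*$, which converges to $(I_{m-1}-F)^{-1} e f^*$ because the Neumann series $\sum_{j\ge 0}F^{j}$ converges when $\rho(F)<1$. For the remaining part, $\left\|\sum_{i=0}^{k-1} F^{k-1-i} e g[i]\right\|_2 \le C\|e\|_2 \sum_{i=0}^{k-1}\lambda^{k-1-i}|g[i]|$; splitting this sum at an index $N$ with $|g[i]|<\epsilon$ for $i\ge N$ bounds the tail by $C\|e\|_2\epsilon/(1-\lambda)$ and the head by $C\|e\|_2 N\lambda^{k-N}\sup_i|g[i]|$, which tends to $0$ as $k\to\infty$; hence this part vanishes. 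Altogether $R[k]$ converges, so in particular its first entry $r[k]$ converges, say to $r^*$.

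It only remains to identify $r^*$. Since $r[k]\to r^*$, each shifted sequence $r[k+j]$, $j=0,\ldots,m-1$, also tends to $r^*$; passing to the limit in \eqref{equ:disdifeqn} and using $f[k]\to f^*$ gives $r^*(1+b_{m-1}+\cdots+b_1) = f^*$. Moreover, since every root of \eqref{equ:discharacteristicequation} lies strictly inside the unit circle, $s=1$ is not a root, so $1+b_1+\cdots+b_{m-1}\ne 0$, and therefore $r^* = f^*/(b_1+\cdots+b_{m-1}+1)$, which is the asserted limit.

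The reformulation, the variation-of-constants formula, and the limit identification are all routine; the step requiring a little care is the estimate showing that the exponentially weighted partial sums of the vanishing residual $g[i]$ themselves vanish --- essentially the standard fact that convolving a geometrically decaying kernel with a null sequence produces a null sequence --- and this is the main (albeit elementary) obstacle. The geometric bound $\|F^k\|_2\le C\lambda^k$ rests only on $\rho(F)<1$ and uses no special structure of the companion form.
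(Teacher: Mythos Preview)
Your proof is correct, but it follows a genuinely different route from the paper's argument. The paper mirrors its continuous-time Lemma~\ref{lem1}: it factors the characteristic polynomial through its roots $\mu_1,\dots,\mu_{m-1}$, writes an explicit particular solution as an $(m-1)$-fold nested sum
\[
r^p[k]=\mu_{m-1}^{k}\sum_{i_{m-1}=0}^{k-1}\mu_{m-1}^{-i_{m-1}-1}\mu_{m-2}^{i_{m-1}}\cdots\sum_{i_1=0}^{i_2-1}\mu_1^{-i_1-1}f[i_1],
\]
and then iterates the scalar identity $\lim_{k\to\infty}\mu^{k}\sum_{i=0}^{k-1}\mu^{-i-1}f[i]=f^*/(1-\mu)$ to obtain the limit $f^*/\prod_i(1-\mu_i)=f^*/(1+b_1+\cdots+b_{m-1})$; the homogeneous contributions $\mathbb{C}^{i}_{k}\mu^{k-i}$ are shown to vanish separately, with a case distinction for repeated roots. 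Your state-space reformulation via the companion matrix $F$ is more streamlined: the single estimate $\|F^k\|_2\le C\lambda^k$ handles simple and repeated roots uniformly, the Neumann series $(I-F)^{-1}$ replaces the iterated scalar limits, and the final value is read off directly from the equation rather than from the product $\prod_i(1-\mu_i)$. The paper's approach makes the parallel with the continuous-time lemma transparent and is fully constructive in the roots; yours is shorter, avoids any bookkeeping of multiplicities, and relies only on the spectral-radius condition.
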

\begin{proof}
See the proof in the Appendix.
\end{proof}

\begin{thm}
Under the proposed consensus protocol defined by \eqref{equ:discreteobserverprotocol}, the discrete-time high-order multi-agent system defined by \eqref{equ:discretegenerallinearity} can reach a consensus if the communication topology $\mathcal{G}_{\sigma[k]}$ is uniformly jointly quasi-strongly connected in the discrete-time sense, all eigenvalues of $A+K_6C$ are in the unit circle and the control parameters $(b_1,\cdots,b_{m-1})$ in \eqref{equ:discreteobserverprotocol} are designed in such a way that all roots of \eqref{equ:discharacteristicequation} are in the unit circle.
\end{thm}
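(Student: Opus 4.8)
The plan is to transcribe the proof of Theorem~\ref{thm:main2} into discrete time, letting Lemma~\ref{lem2} play the part of Lemma~\ref{keylem} and Lemma~\ref{lem:discrete} play the part of Lemma~\ref{lem1}. First I would dispose of the observer error: because every eigenvalue of $A+K_6C$ lies inside the unit circle, $(A+K_6C)^k\to 0$ geometrically, so \eqref{equ:discreteobserve} gives $z_i[k]-x_i[k]\to 0$ for each $i$. Hence $\{x_i[k]\}$ reaches a consensus if and only if $\{z_i[k]\}$ does, and in particular $\lim_{k\to\infty}K_5 x_i[k]=\lim_{k\to\infty}K_5 z_i[k]$ whenever the latter limit exists.

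Next I would invoke Lemma~\ref{lem2}. Equation \eqref{equ:distransformed} is exactly the discrete-time first-order integral system with vanishing disturbance $\omega_d[k]$ studied there, and $\omega_d[k]\to 0$ geometrically for the same eigenvalue reason; so under the assumed discrete-sense uniform joint quasi-strong connectivity, Lemma~\ref{lem2} supplies a scalar $\bar z^{*}$ with $\bar z_i[k]=K_5 z_i[k]\to\bar z^{*}$ for all $i$. Combining with the previous step, $\lim_{k\to\infty}K_5 x_i[k]=\bar z^{*}$ for every $i=1,\dots,N$.

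Then I would lift this back to the whole state through the chain-of-integrators structure: $x_{i,\ell}[k+1]=x_{i,\ell+1}[k]$ gives $x_{i,\ell}[k]=x_{i,1}[k+\ell-1]$, so, setting $r[k]:=x_{i,1}[k]$ and using $K_5=(b_1,\dots,b_{m-1},1)$, we get $K_5 x_i[k]=r[k+m-1]+b_{m-1}r[k+m-2]+\dots+b_2 r[k+1]+b_1 r[k]$. Thus $r[k]$ solves the difference equation \eqref{equ:disdifeqn} with forcing term $f[k]=K_5 x_i[k]\to f^{*}=\bar z^{*}$; since $(b_1,\dots,b_{m-1})$ is chosen so that all roots of \eqref{equ:discharacteristicequation} lie in the unit circle, Lemma~\ref{lem:discrete} gives $x_{i,1}[k]\to\bar z^{*}/(1+b_1+\dots+b_{m-1})=:x^{*}_0$, hence $x_{i,\ell}[k]=x_{i,1}[k+\ell-1]\to x^{*}_0$ for each $\ell$, so $x_i[k]\to x^{*}_0\,1_m$ — the same value for every agent, which is the claimed consensus.

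The hard part is not any single estimate — once the two lemmas are in hand the argument is mostly bookkeeping — but rather making sure the hypotheses of Lemma~\ref{lem:discrete} genuinely hold, i.e.\ that the forcing $f[k]=K_5 x_i[k]$ is a convergent sequence (which is precisely what Lemma~\ref{lem2} plus the geometric decay of the observer error deliver), and keeping the index shifts in the reduction consistent. I would also flag that, unlike the continuous-time limit $x^{*}=(f^{*}/a_1,0,\dots,0)^{T}$, the discrete-time limiting state is $x^{*}_0\,1_m$, since a convergent scalar sequence only forces $x_{i,\ell}[k]-x_{i,\ell+1}[k]\to 0$ rather than pushing the higher components to zero.
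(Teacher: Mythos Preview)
Your proof is correct and follows essentially the same route as the paper's: invoke Lemma~\ref{lem2} on \eqref{equ:distransformed} to get $K_5 z_i[k]\to\bar z^{*}$, transfer this to $K_5 x_i[k]$ via the geometric decay of the observer error \eqref{equ:discreteobserve}, rewrite $K_5 x_i[k]$ as the left-hand side of \eqref{equ:disdifeqn} in $x_{i,1}$ using the shift relations $x_{i,\ell}[k]=x_{i,1}[k+\ell-1]$, and finish with Lemma~\ref{lem:discrete}. Your bookkeeping is in fact tidier than the paper's --- the paper writes the limit as $z^{*}$ and $z^{*}1_N$ where it should have $\bar z^{*}/(1+b_1+\dots+b_{m-1})$ and the corresponding $1_m$-vector, exactly as you note.
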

\begin{proof}
If $\mathcal{G}_{\sigma[k]}$ is uniformly jointly quasi-strongly connected in the discrete-time sense, by Lemma \ref{lem2}, there exists $z^* \in \mathbb{R}$ such that $\lim_{k\to\infty}{\bar{Z}}[k] = z^*1_N$. Then by \eqref{equ:discreteobserve}, $\lim_{k\to\infty}\bar{Z}[k] = \lim_{k\to\infty}(I_N\otimes K_5){Z}[k] = \lim_{k\to\infty}(I_N\otimes K_5)((I_N\otimes(K_6C(A+K_6C)^k))(Z[0]-X[0]) + X[k]) = \lim_{k\to\infty}(I_N\otimes K_5)X[k]$ since all eigenvalues of $A+K_6C$ are in the unit circle. Therefore, $\lim_{k\to\infty}K_5x_i[k] = z^*$ which implies that
\begin{IEEEeqnarray}{ll}
&x_{i,1}[k+m-1]+b_{m-1}x_{i,1}[k+m-2]+\cdots + b_{2}x_{i,1}[k+1] + b_1x_{i,1}[k] = f^i_d[k], \IEEEnonumber\\
&\lim_{k\to\infty}f^i_d[k] = z^*,\quad i=1,\cdots,N.
\end{IEEEeqnarray}

Since all roots of \eqref{equ:discharacteristicequation} are in the unit circle, it follows by Lemma \ref{lem:discrete} that $\lim_{k\to\infty}x_{i,1}[k] = z^*$. Due to the relationship between $x_{i,1}[k],\cdots,x_{i,m}[k]$, it can be proved that $\lim_{k\to\infty}x_{i}[k] = z^*1_N$, $i=1,\cdots,N$, which closes the proof of this Theorem.
\end{proof}

\section{Simulation Examples}\label{Simulation}
In this section, three simulation examples are provided to validate the correctness of theoretical analysis.

\subsection{Example One}
Consider a network of five identical single-link flexible-joint robots. A sketch of this robot is shown in Fig. \ref{fig:robotsketch}. According to \cite{Slotine91Book}, the $i$th robot is modeled by the following dynamics
\begin{equation}\label{equ:robotic}
\begin{cases}
& I\ddot{q}_{i,1}(t) + MgL\sin q_{i,1}(t) + k(q_{i,1}(t)-q_{i,2}(t)) = 0, \\
& J\ddot{q}_{i,2}(t) - k(q_{i,1}(t)-q_{i,2}(t)) = \tau_i(t),
\end{cases}
\end{equation}
where $q_{i,1}(t)$ denotes the angle of the robotic link; $q_{i,2}(t)$ represents the angle of actuator; $I$ and $J$ denote the inertia of the actuator and robotic link, respectively; $M$ is the mass of the robotic link; $L$ represents the length between the mass center and the joint mounting point; $k$ denotes the spring's torsion coefficient; $\tau_i(t)$ is the torque input. The communication topology switches between the following three cases (shown in Fig. \ref{fig:topology}). $\sigma(t) = \mod(10t,3)+1$. Then $\forall t \geq 0$, the union graph $\mathcal{G}_{[t,t+0.3)}$ is quasi-strongly connected which means that $\mathcal{G}_{\sigma(t)}$ is uniformly jointly quasi-strongly connected. And $\alpha^{21}_{\mathcal{G}_1} = \alpha^{41}_{\mathcal{G}_1} = \alpha^{53}_{\mathcal{G}_1} = 1$; $\alpha^{15}_{\mathcal{G}_2} = \alpha^{23}_{\mathcal{G}_2} = \alpha^{24}_{\mathcal{G}_2} = 1$; $\alpha^{32}_{\mathcal{G}_3} = \alpha^{34}_{\mathcal{G}_3} =\alpha^{45}_{\mathcal{G}_3} = 1$; and all other $ \alpha^{ij}_{\mathcal{G}_k} = 0$, $i,j=1,\cdots,N, k=1,2,3$.  The control objective is to design a distributed protocol for each robot to drive all robots' joints to a same angle, i.e., there exists $q^*_1$ and $q^*_2$ such that $\lim_{t\to\infty}q_{i,1}(t) = q^*_1$ and $\lim_{t\to\infty}q_{i,2}(t) = q^*_2$, $i=1,\cdots,5$.

By the feedback linearization technique, let
\begin{IEEEeqnarray}{lll}\label{equ:robotdynamics}
& x_{i,1}(t) = q_{i,1}(t), \IEEEnonumber\\
& x_{i,2}(t) = \dot{q}_{i,1}(t), \IEEEnonumber\\
& x_{i,3}(t) = -MgL\sin(q_{i,1}(t))/I - k(q_{i,1}(t)- q_{i,2}(t))/I,\IEEEnonumber\\
& x_{i,4}(t) = -MgL\dot{q}_{i,1}(t)\cos(q_{i,1}(t))/I - k(\dot{q}_{i,1}(t) - \dot{q}_{i,2}(t))/I,
\end{IEEEeqnarray}
and
\begin{multline}\label{equ:robotcontrol}
\tau_i(t) = \frac{IJ}{k}\bigg(u_i(t) - \frac{MgL}{I}\sin(q_{i,1}(t))( \dot{q}^2_{i,1}(t) + \frac{MgL}{I}\cos(q_{i,1}(t)) + \frac{k}{I})\\+\frac{k}{I}(q_{i,1}(t)- q_{i,2}(t))(\frac{k}{I} + \frac{k}{J} + \frac{MgL}{I}\cos(q_{i,1}(t)))\bigg).
\end{multline}
The nonlinear robotic dynamics defined by \eqref{equ:robotic} can be transformed into the following high-order integral dynamics
\begin{equation}
\dot{x}_{i,1}(t) = x_{i,2}(t),\;\;\dot{x}_{i,2}(t) = x_{i,3}(t),\;\;\dot{x}_{i,3}(t) = x_{i,4}(t),\;\;\dot{x}_{i,4}(t) = u_{i}(t).
\end{equation}
Therefore, the consensus protocol $u_i(t)$ in \eqref{equ:robotcontrol} can be designed according to \eqref{equ:protocol}. $K_1$ and $K_2$ in \eqref{equ:protocol} are set as $K_1 = (0,-1,-3,-3)$ and $K_2 = (1,3,3,1)$. It is easy to verify that under these control parameters, all roots of \eqref{equ:lemdifeqn} are in the open left-half complex plane. By Theorem \ref{thm:main}, the control objective can be achieved. In the simulation, the parameters in \eqref{equ:robotic} are set as follows: $M = 1.5$kg; $g=9.8$m/s$^2$; $J=3.2$kg$\cdot$m$^2$;  $I=1$kg$\cdot$m$^2$; $L=0.8$m; $k=2.5$N/deg. The initial joint configuration is set as $q_{1,1}(0) = 2.5$deg, $q_{1,2}(0) = 1.5$deg, $q_{2,1}(0) = 1.9$deg, $q_{2,2}(0) = 3.14$deg, $q_{3,1}(0) = -2.4$deg, $q_{3,2}(0) = -2.6$deg, $q_{4,1}(0) = 1.57$deg, $q_{4,2}(0) = -1.5$deg, $q_{5,1}(0) = -3.14$deg, $q_{5,2}(0) = 0$deg and $\dot{q}_{i,1}(0) = \dot{q}_{i,2}(0) = 0$deg/s. The simulation results are shown in Fig. \ref{fig:joint1}. It can be seen that $\lim_{t\to\infty}q_{i,1}(t) = -3.255$deg and $\lim_{t\to\infty}q_{i,2}(t) =  -2.723$deg, which means that the consensus problem is solved. Therefore, the correctness of Theorem \ref{thm:main} is illustrated by this example. In addition, this example also shows that the consensus problem of some nonlinear multi-agent systems can be solved by combining the proposed protocol and the feedback linearization approach.

\subsection{Example Two}
This example studies the attitude consensus problem of a group of four aircrafts. The schematic diagram of the aircraft is shown in Fig. \ref{fig:aircraft}. A simplified dynamical model of the aircraft vertical motion is introduced in \cite{Slotine91Book}. That is
\begin{IEEEeqnarray}{ll}\label{equ:aircraft}
& J\ddot{\alpha}_i + b\dot{\alpha}_i + (C_{ZE}l + C_{ZW}d)\alpha_i = C_{ZE}lE_i, \IEEEyessubnumber\\
& m\ddot{h}_i = (C_{ZE} + C_{ZW})\alpha_i - C_{ZE}E_i, \IEEEyessubnumber
\end{IEEEeqnarray}
where $h_i$ denotes the aircraft's attitude; $E_i$ denotes the elevator rotation angle (control input); $\alpha_i$ denotes the rotation angle of the aircraft about its mass center $C_G$; $b$ is the friction coefficient;
$L_W = C_{ZW}\alpha_i$ is the lifting force applied at the ``center of lift'' $C_L$; $L_E = C_{ZE}(E_i-\alpha_i)$ is the aerodynamic force on the elevator; $m$ is the aircraft's mass and $J$ is its moment of inertia about $C_G$.
In \cite{Slotine91Book}, these parameters are set as $J =1$, $m=1$, $b=4$, $C_{ZE} = 1$, $C_{ZW} = 5$, $l=3$ and $d=0.2$. Let $x_i = (x_{i,1},x_{i,2},x_{i,3},x_{i,4})^T = (\alpha_i,\dot{\alpha}_i,h_i,\dot{h}_i)^T$,
then equation \eqref{equ:aircraft} can be written in the form of \eqref{equ:generallinearity} with
\begin{displaymath}
A_g = \begin{pmatrix}
0 & 1 & 0 & 0\\
-4 & -4 & 0 & 0\\
0 & 0 & 0& 1\\
6 & 0 & 0 & 0
\end{pmatrix},\quad B_g = \begin{pmatrix}0\\3\\0\\-1\end{pmatrix}.
\end{displaymath}
In this example, it is assumed that only the aircraft's attitude $h_i$ can be measured, which means that $y_i = C_gx_i = h_i$ with $C_g = (0,0,1,0)$ in \eqref{equ:generallinearity}. It is easy to verify that under these parameters, the attitude dynamics of aircraft is controllable and observable.   The switch signal $\sigma(t)$ of the communication topology is $\sigma(t) = \mod(10t,4)+1$. And $\alpha^{ij}_{\mathcal{G}_k} = 1$ if $e_{ji} \in \mathcal{G}_k$; otherwise $\alpha^{ij}_{\mathcal{G}_k} = 0$, $i,j \in\{1,\cdots, N\}$, $k\in\{1,2,3,4\}$. And it is easy to see that $\mathcal{G}_{\sigma(t)}$ is uniformly jointly quasi-strongly connected. Therefore, by the analysis in Section \ref{sub:generallinear}, the attitude consensus can be reached if the parameters in \eqref{equ:tmp7} are selected as follows
\begin{IEEEeqnarray}{ll}
&  K_1 = (0,-1,-3,-3), K_2 = (1,3,3,1), K_3 = (1/3,-2/3,-6,-7)^T,\IEEEnonumber\\
& T^{-1}_g = \begin{pmatrix}
0 & 0 & 3 & 0\\
0 & 0 & 0 & 3\\
14 & -4 & -1 & 0\\
0 & 14 & -4 & -1
\end{pmatrix}. \IEEEnonumber
\end{IEEEeqnarray}
In the simulation, four aircrafts' initial states are $x_1(0) = (0,0,8000,0)^T$, $x_1(0) = (0,0,6500,0)^T$, $x_1(0) = (0,0,7000,0)^T$, $x_1(0) = (0,0,5000,0)^T$; and the initial estimated states are $s_1(0) = (0,0,10000,0)^T$, $s_1(0) = (0,0,7000,0)^T$, $s_1(0) = (0,0,6000,0)^T$, $s_1(0) = (0,0,4000,0)^T$. The trajectory profiles of four aircrafts' real states and estimated states are displayed in Fig. \ref{fig:attitude}, which implies that four aircrafts' attitudes reach a consensus.

\subsection{Example Three}
To validate the theoretical analysis in Section \ref{discrete}, the consensus problem of a group of four discrete-time high-order integral agents defined by \eqref{equ:discretegenerallinearity} is studied. The parameters in \eqref{equ:discretegenerallinearity} are set as $m=4$ and $C=(1,0,0,0)$. The switching communication topology is similar with the one in Example Two ($\sigma[k] = \mod(k,4)+1, k=0,1,\cdots$). The control gain vectors in \eqref{equ:discreteobserverprotocol} are set as $K_4 = (1/8, 5/8, 3/4, -1/2)$, $K_5 = (1/8, 3/4, 3/2, 1)$ and $K_6 = (-2,-3/2,-1/2,-1/16)^T$. Then, all roots of \eqref{equ:discharacteristicequation} are in the unit circle. The initial states of four agents are randomly selected, and the initial estimated states are set as $0_4$. The trajectory profiles of $x_{i,1}[k]$ ($i=1,2,3,4$) are given in Fig. \ref{fig:discreteagent}, which illustrates the effectiveness of the proposed protocol defined by \eqref{equ:discreteobserverprotocol}.

\section{Conclusion}\label{Conclusion}
This paper studies the consensus of high-order integral multi-agent systems under the switching directed topology. The state-feedback based protocol and the dynamic output-feedback based protocol are proposed to solve the consensus problem, respectively. It is noted that by the proposed approaches, the consensus of the high-order integral multi-agent system can be transformed into the consensus of the first-order integral multi-agent system if all roots of the polynomial (\ref{equ:characteristicequation}) are in the open left-half complex plane. By using results in the robust consensus, it is proved that under both consensus protocols, the connectivity condition on the switching directed topology is only ``uniformly jointly quasi-strongly connected'', which is much weaker than the existing conditions. Finally, it should be noted that one most interesting contribution of this paper is that under the proposed protocols, there exists certain ``equivalence'' between the first-order integral multi-agent system and the high-order integral multi-agent system. Most results in the first-order integral multi-agent system can therefore be generalized to the high-order integral case, which deserves more investigation in the future.

\section*{Appendix: Proof of Lemma \ref{lem1}}
\textbf{Proof:} (Sufficiency) Let the roots of (\ref{equ:characteristicequation}) be $\lambda_1, \lambda_2, \cdots, \lambda_{m-1}$. Without loss of generality, it is assumed that $\lambda_1 = \cdots = \lambda_{k_1}$, $\lambda_{k_1+1} = \cdots = \lambda_{k_1+k_2}$, $\cdots$, $\lambda_{\sum^{l-1}_{i=1}k_i+1} = \cdots = \lambda_{\sum^{l}_{i=1}k_i}$ $(l \in \mathbb{N}, 1 \leq l \leq m-1, k_i \in \mathbb{N}, 1 \leq k_i \leq m-1, \sum^l_{i=1}k_i = m-1)$. By the knowledge of differential equation, one particular solution of (\ref{equ:lemdifeqn}) can be written in the following form
\begin{equation}\label{equ:particularsolution}
r^p(t) = e^{\lambda_1 t}\int^t_0e^{(\lambda_2-\lambda_1)\tau_1}\cdots\int^{\tau_{m-3}}_0e^{(\lambda_{m-1}-\lambda_{m-2})\tau_{m-2}}\int^{\tau_{m-2}}_0e^{-\lambda_{m-1}\tau_{m-1}}f(\tau_{m-1})d\tau_{m-1}d\tau_{m-2}\cdots d\tau_1.
\end{equation}

After obtaining the particular solution $r^p(t)$, the general solution of (\ref{equ:lemdifeqn}) can be expressed as follows
\begin{equation}\label{equ:generalsolution}
r(t) = r^p(t) + \sum^l_{j=1}\sum^{k_j}_{i=1}C_{\sum^{j-1}_{p=1}k_p+i}R_{ji}(t), \quad R_{ji}(t) = t^{i-1}e^{t\lambda_{\sum^{j-1}_{p=1}k_p+1}},
\end{equation}
where $C_1,\cdots,C_{m-1}$ are coefficients to be determined by the initial state.

Next, let us study the limitation property of $e^{\lambda t}\int^t_0e^{-\lambda s}g(s)ds$ where $\lambda$ is a complex constant with negative real part and $g(t)$ is a continuous function satisfying $\lim_{t\to\infty}g(t) = g^*$. Since $\lim_{t\to\infty}g(t) = g^*$, for any $\epsilon > 0$, there exists $t_1>0$ such that $|g(t) - g^*| < \epsilon$, $\forall t > t_1$. In addition, there exists a constant $M < \infty$ such that $|g(t)| < M$, $\forall t \geq 0$. Then
\begin{IEEEeqnarray}{lll}\label{equ:tmp1}
& \left|e^{\lambda t}\int^t_0e^{-\lambda s}g(s)ds + \frac{g^*}{\lambda}\right| = \left|\int^{t_1}_0e^{\lambda (t-s)}g(s)ds\right| +  \left|\int^t_{t_1}e^{\lambda (t-s)}g(s)ds  + \frac{g^*}{\lambda}\right| \leq e^{\lambda t}M\frac{1-e^{-\lambda t_1}}{\lambda} + \IEEEnonumber\\
& \left|\int^t_{t_1}e^{\lambda (t-s)}(g(s) - g^*)ds\right| + \left|\int^t_{t_1}e^{\lambda (t-s)}g^*ds + \frac{g^*}{\lambda}\right|  \leq e^{\lambda t}M\frac{1-e^{-\lambda t_1}}{\lambda} +   \epsilon\left(-\frac{1}{\lambda} + \frac{e^{\lambda(t-t_1)}}{\lambda}\right) + \IEEEnonumber\\
& g^*\left(\frac{e^{\lambda(t-t_1)}}{\lambda}\right).
\end{IEEEeqnarray}

Since $\lambda$ has the negative real part, (\ref{equ:tmp1}) leads to
\begin{equation}\label{equ:tmp2}
\lim_{t\to\infty}e^{\lambda t}\int^t_0e^{-\lambda s}g(s)ds = - \frac{g^*}{\lambda}.
\end{equation}

If $\lambda_1,\cdots,\lambda_{m-1}$ are all in the open left-half complex plane and $\lim_{t\to\infty}f(t) = f^*$,  applying (\ref{equ:tmp2}) to (\ref{equ:particularsolution}) by $(m-1)$ times obtains that
\begin{equation}\label{equ:tmp3}
\lim_{t\to\infty} r^p(t) = \frac{f^*}{\prod^{m-1}_{i=1}(-\lambda_i)} = \frac{f^*}{a_1}.
\end{equation}

In addition, if $\lambda_1,\cdots,\lambda_{m-1}$ are all in the open left-half complex plane, then $R_{ji}(t)$ in (\ref{equ:generalsolution}) has the property that $\lim_{t\to\infty}R_{ji}(t) = 0$. Then it can be obtained that whatever the coefficients $C_1,\cdots,C_{m-1}$ are, $r(t)$ in (\ref{equ:generalsolution}) always satisfies that $\lim_{t\to\infty}r(t) = f^*/a_1$. Furthermore, $r^{(1)}(t) = d\left({r}^p(t) + \sum^l_{j=1}\sum^{k_j}_{i=1}C_{\sum^{j-1}_{p=1}k_p+i}{R}_{ji}(t)\right)/dt$. And
\begin{multline}
\frac{dr^p(t)}{dt} = \lambda_1 e^{\lambda_1t}\int^t_0e^{(\lambda_2-\lambda_1)\tau_1}\cdots\int^{\tau_{m-2}}_0e^{-\lambda_{m-1}\tau_{m-1}}f(\tau_{m-1})d\tau_{m-1}\cdots d\tau_1\\+ e^{\lambda_2t}\int^t_0e^{(\lambda_3-\lambda_2)\tau_1}\cdots\int^{\tau_{m-3}}_0e^{-\lambda_{m-1}\tau_{m-2}}f(\tau_{m-2})d\tau_{m-2}\cdots d\tau_1.
\end{multline}

Therefore, if $\lambda_1,\cdots,\lambda_{m-1}$ all have the negative real parts, then $\lim_{t\to\infty} dr^p(t)/dt = \frac{(-\lambda_1)f^*}{\prod^{m-1}_{i=1}(-\lambda_i)} - \frac{f^*}{\prod^{m-1}_{i=2}(-\lambda_i)} = 0$. And it can also be obtained that $\lim_{t\to\infty}d{R}_{ji}(t)/dt = 0$, which leads to that $\lim_{t\to\infty}r^{(1)}(t) = 0$. By the same analysis, it can be proved that $\lim_{t\to\infty}r^{(i)}(t) = 0$, $i=1,\cdots,m-2$. Hence $\lim_{t\to\infty}(r(t),r^{(1)}(t),\cdots,r^{(m-2)}(t)) = (f^*/a_1,0,\cdots,0)$.

(Necessity) Let $r_g(t) = (r(t),r^{(1)}(t),\cdots,r^{(m-2)}(t))^T$, then (\ref{equ:lemdifeqn}) can be rewritten in the following form

\begin{equation}\label{equ:lemdifeqncompact}
\dot{r}_g(t) = \begin{pmatrix}
0 & 1 &  \cdots & 0\\
\vdots & \vdots &  \ddots & \vdots\\
0 & 0 & \cdots & 1\\
-a_1 & -a_2 & \cdots & -a_{m-1}
\end{pmatrix}r_g(t) + \begin{pmatrix}0\\\vdots\\0\\1\end{pmatrix}f(t) \equiv A_rr_g(t) + B_rf(t).
\end{equation}

Let the state transition matrix of (\ref{equ:lemdifeqncompact}) be $\Phi_r(t,s) = \exp(A_r(t-s))$. Then $r_g(t)$ can be solved as $r_g(t) = \Phi_r(t,0)r_g(0) + \int^t_0\Phi_r(t,s)B_rf(s)ds$.
If some roots of (\ref{equ:characteristicequation}) are not in the open left-half complex plane, by the knowledge of linear system, there exists at least one initial state $r^1_g(0)$ such that $\Phi_r(t,0)r^1_g(0)$ is not convergent as $t$ goes to infinity. Assume that under this initial state $r^1_g(0)$, we can still find a continuous function $f_1(t)$ satisfying $\lim_{t\to\infty}f_1(t) = f^*_1$ such that $r_g(t) = \Phi_r(t,0)r_g(0) + \int^t_0\Phi_r(t,s)B_rf_1(s)ds$ is convergent to $r^*_g$. From the other side, let us consider another continuous function $f_2(t) = 2f_1(t)$. Then under the same initial state $r^1_g(0)$, $r_g(t) + \Phi_r(t,0)r^1_g(0) = 2\Phi_r(t,0)r^1_g(0) + \int^t_0\Phi_r(t,s)B_rf_2(s)ds$ is convergent to $2r^*_g$. Since $\Phi_r(t,0)r^1_g(0)$ is not convergent, $r_g(t)$ with $f(t) = f_2(t)$ in (\ref{equ:lemdifeqncompact}) is not convergent. This contradicts with that $r_g(t)$ is convergent for any initial state $r^1_g(0)$.$\blacksquare$

\section*{Appendix: Proof of Lemma \ref{lem2}}
\textbf{Proof:}  This proof is similar with the one of Lemma \ref{keylem}. Note that one particular solution to \eqref{equ:disdifeqn} is
\begin{equation}
r^p[k] = \mu^k_{m-1}\sum^{k-1}_{i_{m-1}=0}\mu^{-i_{m-1}-1}_{m-1}\mu^{i_{m-1}}_{m-2}\sum^{i_{m-1}-1}_{i_{m-2}=0}\mu^{-i_{m-2}-1}_{m-2}\cdots\mu^{i_3}_2\sum^{i_3-1}_{i_2=0}\mu^{-i_2-1}_{2}\mu^{i_2}_1\sum_{i_1=0}^{i_2-1}\mu^{-i_1-1}_1f[i_1],
\end{equation}
where $\{\mu_1,\cdots,\mu_{m-1}\}$ are the roots of \eqref{equ:discharacteristicequation}. Assume that $\mu_1 = \cdots = \mu_{h_1}$, $\mu_{h_1+1} = \cdots = \mu_{h_1+h_2}$, $\cdots$, $\mu_{\sum^{l-1}_{i=1}h_i+1} = \cdots = \mu_{\sum^{l}_{i=1}h_i}$ $(l \in \mathbb{N}, 1 \leq l \leq m-1, h_i \in \mathbb{N}, 1 \leq h_i \leq m-1, \sum^l_{i=1}h_i = m-1)$. Then the general solution $r[k]$ to \eqref{equ:disdifeqn} can be written as a combination of $r^p[k]$ and $D_{ji}[k] = \mathbb{C}^{i}_{k}\left(\mu_{\sum^{j-1}_{p=1}h_p+1}\right)^{k-i}$ $(1 \leq j \leq l; 1 \leq i \leq h_j)$.

From one hand, the following property holds for $r^p[k]$
\begin{equation}
\lim_{k\to\infty}\mu^k\sum^{k-1}_{i=0}\mu^{-i-1}f[i] = f^*/(1-\mu), \quad \forall |\mu| < 1.
\end{equation}
Since all roots of \eqref{equ:disdifeqn} are in the unit circle, it follows that
\begin{equation}
\lim_{k\to\infty}r^p[k] = \frac{f^*}{\prod^{m-1}_i(1-\mu_i)} = f^*/(b_{m-1} + \cdots + b_1 + 1).
\end{equation}

From the other hand, it is easy to prove that $\lim_{k\to\infty}D_{ji}[k] = 0$ $(1 \leq j \leq l; 1 \leq i \leq h_j)$ if all roots of \eqref{equ:discharacteristicequation} are in the unit circle.

Therefore, the general solution to \eqref{equ:disdifeqn} is convergent to $f^*/(b_{m-1} + \cdots + b_1 + 1)$ regardless of the initial state. $\blacksquare$
\newpage
\bibliographystyle{IEEEtran}
\bibliography{IEEETIEbib}
\newpage
\listoffigures
\newpage
\begin{figure}[t]
\centering
\includegraphics[scale = 0.25]{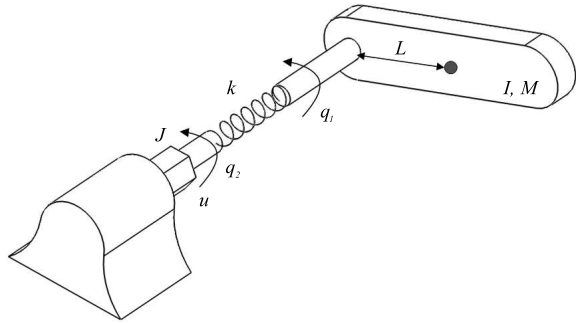}
  \caption{A sketch of a single-link flexible-joint robot.}\label{fig:robotsketch}
\end{figure}

\begin{figure}[t]
\centering \subfigure[]{
\includegraphics[scale = 0.34]{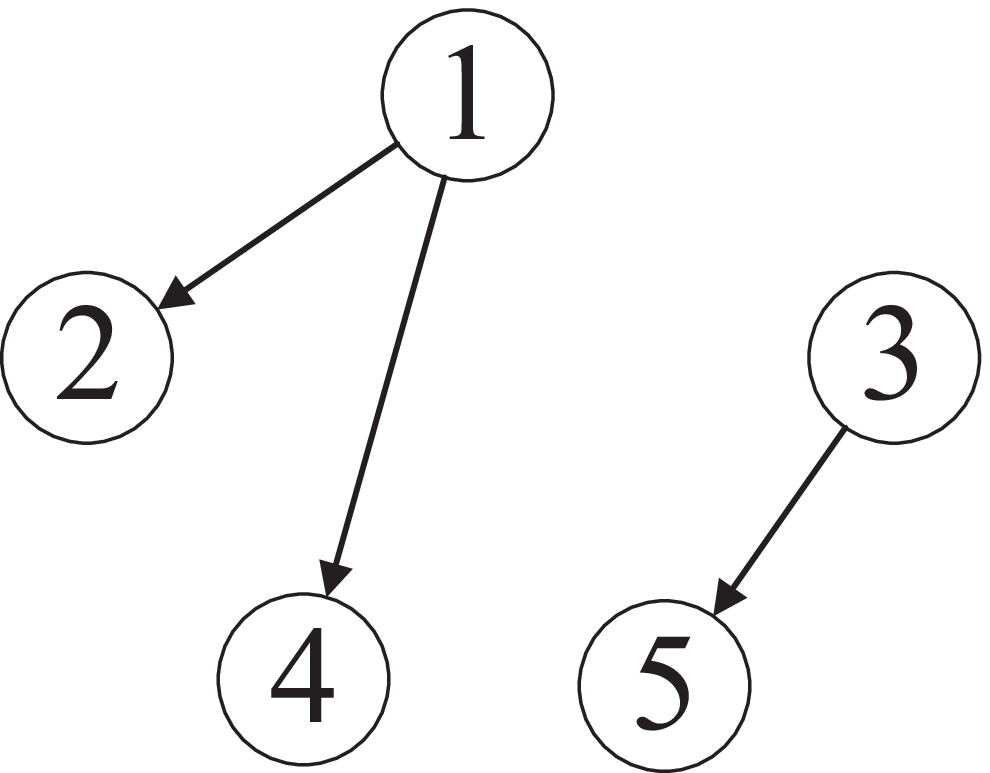}}
\hspace{0.1cm}\subfigure[]{
\includegraphics[scale = 0.35]{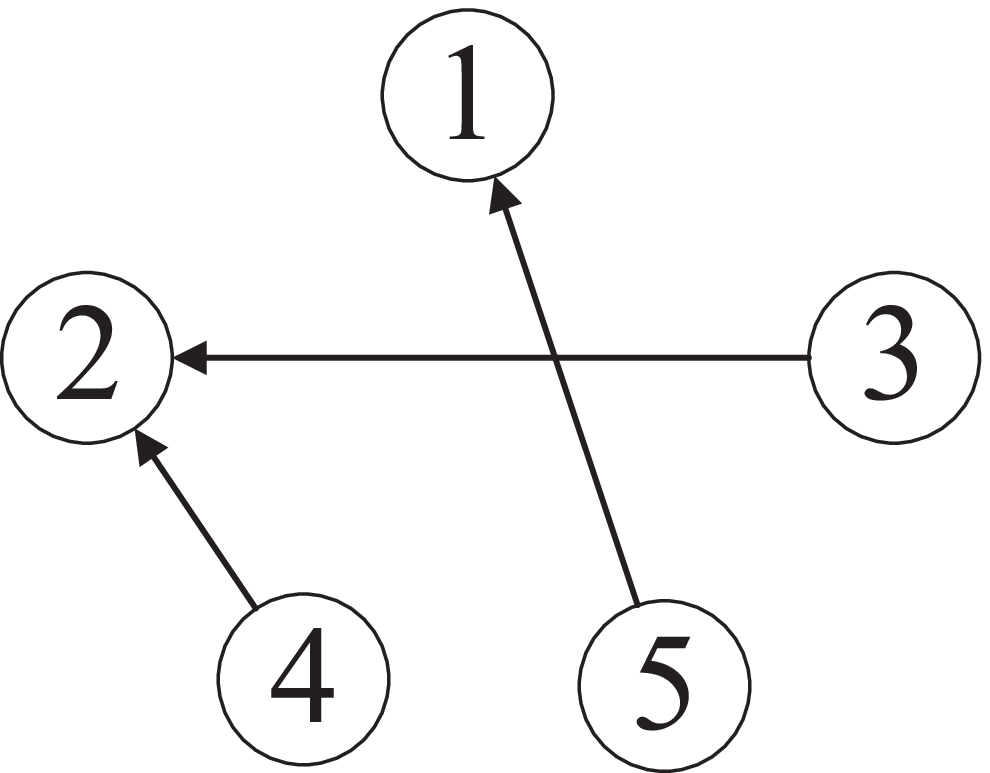}}
\hspace{0.1cm}\subfigure[]{
\includegraphics[scale = 0.35]{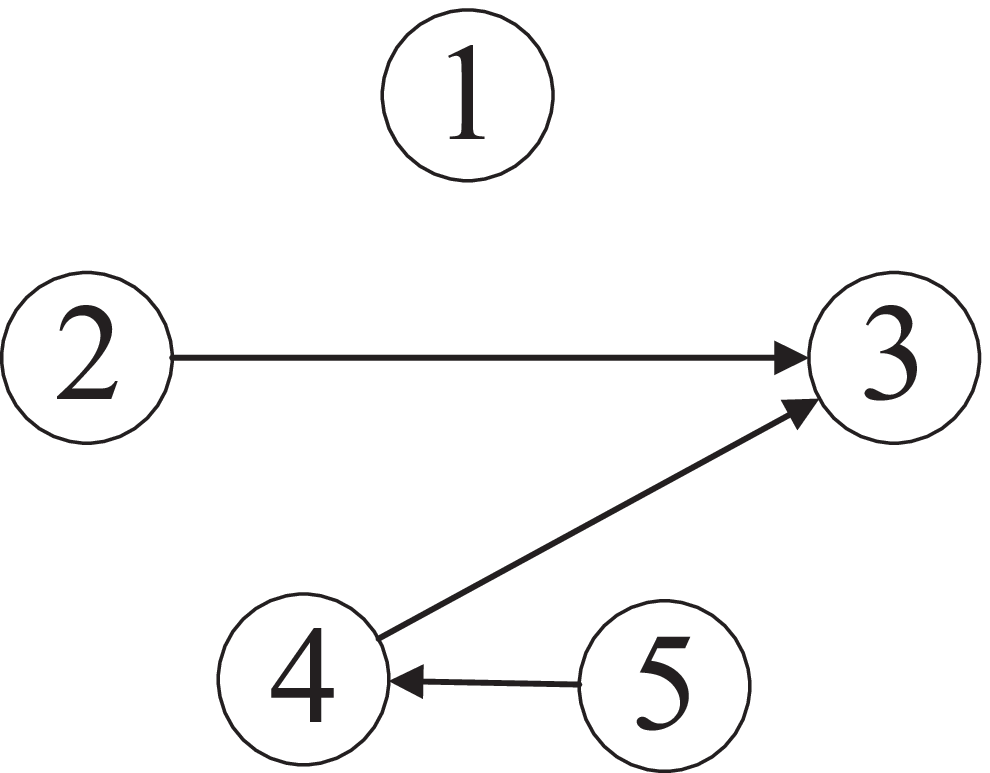}}
  \caption{Three possible communication topologies of the multi-agent system in Example One: (a) $\mathcal{G}_{1}$; (b) $\mathcal{G}_{2}$; and (c) $\mathcal{G}_{3}$.}\label{fig:topology}
\end{figure}

\begin{figure}[t]
\centering \subfigure[]{\tiny\psfrag{t}{Time (second)}\psfrag{a1}{$q_{1,1}(t)$}\psfrag{a2}{$q_{2,1}(t)$}\psfrag{a3}{$q_{3,1}(t)$}\psfrag{a4}{$q_{4,1}(t)$}\psfrag{a5}{$q_{5,1}(t)$}
\includegraphics[scale = 0.4]{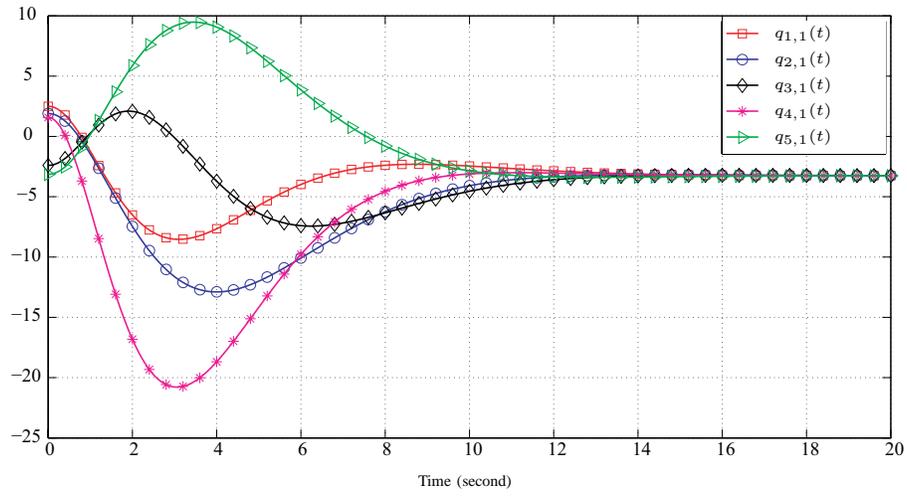}}
\subfigure[]{\tiny\psfrag{t}{Time (second)}\psfrag{a1}{$q_{1,2}(t)$}\psfrag{a2}{$q_{2,2}(t)$}\psfrag{a3}{$q_{3,2}(t)$}\psfrag{a4}{$q_{4,2}(t)$}\psfrag{a5}{$q_{5,2}(t)$}
\includegraphics[scale = 0.4]{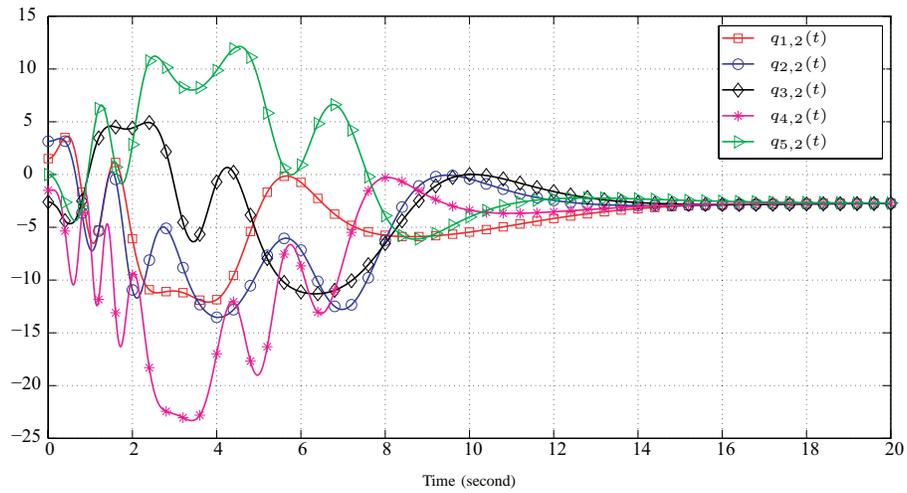}}
  \caption{The joint profiles of five single-link flexible-joint robots: (a) $q_{i,1}(t)$; (b) $q_{i,2}(t)$ $i=1,\cdots,5$.}\label{fig:joint1}
\end{figure}

\begin{figure}[t]
\centering\tiny\psfrag{alpha}{$\alpha$}\psfrag{Cg}{$C_G$}\psfrag{d}{$d$}\psfrag{l}{$l$}\psfrag{Cl}{$C_L$}\psfrag{Lw}{$L_W$}\psfrag{Le}{$L_E$}\psfrag{E}{$E$}
\includegraphics[scale = 0.5]{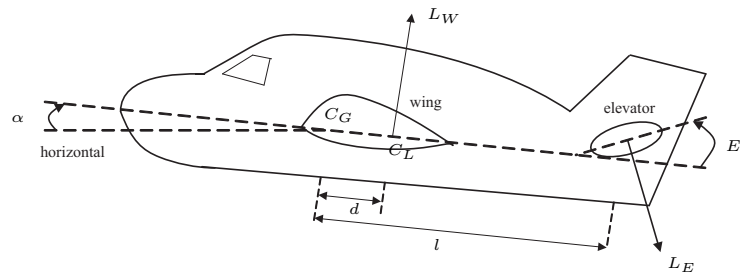}
  \caption{A schematic diagram of an aircraft.}\label{fig:aircraft}
\end{figure}

\begin{figure}[t]
\centering
\includegraphics[scale = 0.35]{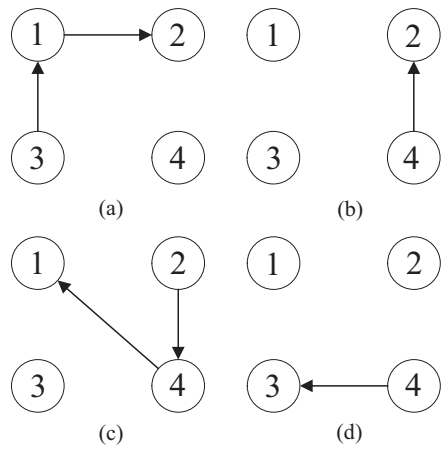}
  \caption{Four possible communication topologies of the multi-agent system in Examples Two and Three: (a) $\mathcal{G}_{1}$; (b) $\mathcal{G}_{2}$; (c) $\mathcal{G}_{3}$; (d) $\mathcal{G}_{4}$.}\label{fig:topology-ex2}
\end{figure}

\begin{figure}[t]
\centering \subfigure[]{\tiny\psfrag{t}{Time (second)}\psfrag{a1}{the real attitude of aircraft 1}\psfrag{a2}{the real attitude of aircraft 2}\psfrag{a3}{the real attitude of aircraft 3}\psfrag{a4}{the real attitude of aircraft 4}
\includegraphics[scale = 0.4]{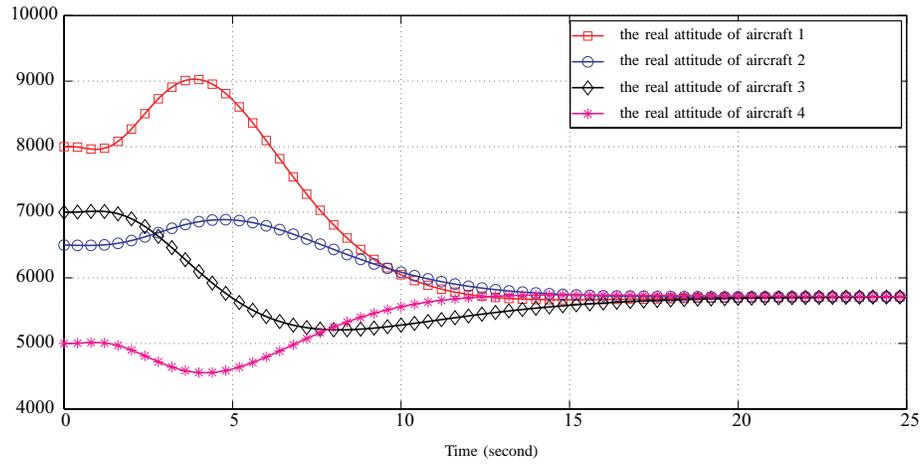}}
\subfigure[]{\tiny\psfrag{t}{Time (second)}\psfrag{a1}{the estimated attitude of aircraft 1}\psfrag{a2}{the estimated attitude of aircraft 2}\psfrag{a3}{the estimated attitude of aircraft 3}\psfrag{a4}{the estimated attitude of aircraft 4}
\includegraphics[scale = 0.4]{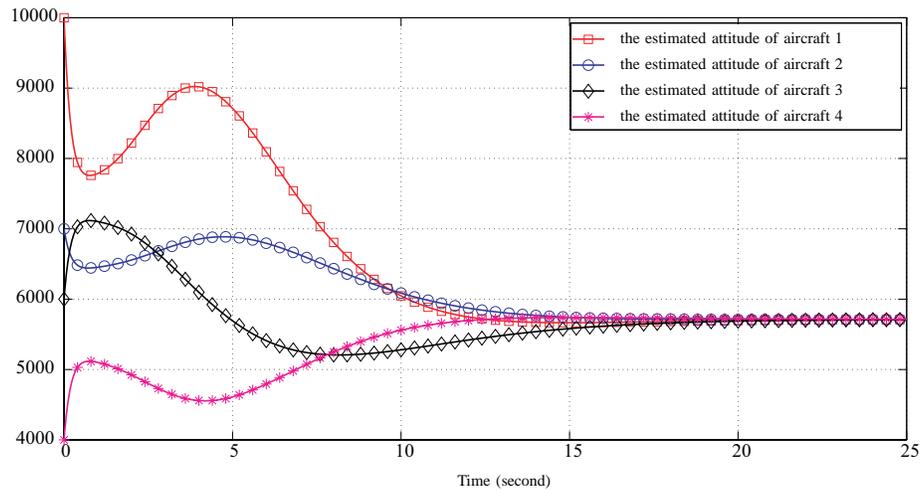}}
  \caption{The trajectory profiles of four aircrafts' real attitudes and estimated attitudes: (a) the profiles of the real attitudes; (b) the profiles of the estimated attitudes.}\label{fig:attitude}
\end{figure}

\begin{figure}[t]
\centering \subfigure[]{\tiny\psfrag{k}{Step (k)}\psfrag{a1}{$x_{1,1}[k]$}\psfrag{a2}{$x_{2,1}[k]$}\psfrag{a3}{$x_{3,1}[k]$}\psfrag{a4}{$x_{4,1}[k]$}
\includegraphics[scale = 0.4]{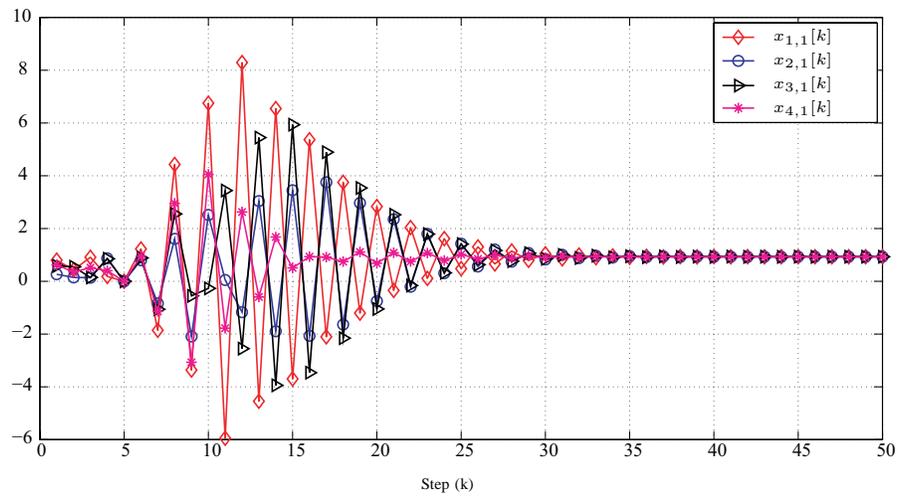}}
  \caption{The trajectory profiles of four agents' first-dimensional states.}\label{fig:discreteagent}
\end{figure}

\end{document}